\tikzstyle{Lang}=[fill={rgb,255: red,66; green,173; blue,255}, draw={rgb,255: red,2; green,145; blue,255}, shape=circle, minimum width=1.2cm, font={\small}]
\tikzstyle{Param}=[fill={rgb,255: red,255; green,231; blue,133}, draw={rgb,255: red,255; green,204; blue,0}, shape=rectangle, fill opacity=.5, rounded corners=0.1cm]
\tikzstyle{Plus}=[fill=white, draw=black, shape=circle]
\tikzstyle{langGroup}=[fill={rgb,255: red,66; green,173; blue,255}, draw={rgb,255: red,2; green,145; blue,255}, shape=circle, fill opacity=.5, rounded corners=0.1cm]
\tikzstyle{Param Box}=[draw={rgb,255: red,255; green,204; blue,0}, fill={rgb,255: red,255; green,231; blue,133}, rounded corners=0.1cm]
\tikzstyle{LangBox}=[fill={rgb,255: red,66; green,173; blue,255}, draw={rgb,255: red,2; green,145; blue,255}, shape=rectangle]
\tikzstyle{Text}=[fill=none, draw=none, shape=circle, font={\small}]
\tikzstyle{dash compiler}=[draw={rgb,255: red,2; green,145; blue,255}, line width=0.05cm, dashed, >=stealth, ->]
\tikzstyle{compiler}=[draw={rgb,255: red,2; green,145; blue,255}, ->, line width=0.05cm, >=stealth]
\tikzstyle{Param Fn}=[draw={rgb,255: red,255; green,204; blue,0}, ->]
\tikzstyle{Dep}=[draw={rgb,255: red,44; green,166; blue,93}, ->, line width=0.05cm, dashed]
\tikzstyle{GreenBox}=[-, fill={rgb,255: red,58; green,217; blue,122}, draw={rgb,255: red,44; green,166; blue,93}, rounded corners=0.1cm, fill opacity=0.5]
\tikzstyle{Green}=[draw={rgb,255: red,44; green,166; blue,93}, ->]
\newcommand{\fwk}{Pyrosome}
\newcommand{\tnat}{\text{nat}}
\newcommand{\tunit}{\text{unit}}
\newcommand{\bool}{\text{bool}}
\newcommand{\tru}{\text{true}}
\newcommand{\fls}{\text{false}}
\newcommand{\ite}[3]{\text{if }#1\text{ then }#2\text{ else }#3}
\newcommand{\eset}[2]{\text{set }#1 := #2}
\newcommand{\esetCPS}[2]{\eset {#1} {#2}\text{ in }}
\newcommand{\eget}[1]{\text{get }#1}
\newcommand{\egetCPS}[2]{\eget {#1} \text{ as } #2\text{ in }}
\newcommand{\efix}[4]{\text{fix }#1(#2:#3):=#4}
\newcommand{\efixCC}[0]{\text{fix }}
\newcommand{\elamNoVar}[1]{\lambda#1.~}
\newcommand{\elam}[2]{\elamNoVar{(#1:#2)}}
\newcommand{\eret}[1]{\text{ret }#1}
\newcommand{\epair}[2]{\langle #1, #2 \rangle}
\newcommand{\pmPair}[3]{\text{let } \langle #2, #3 \rangle := #1 \text{ in }}
\newcommand{\eclo}[5]{\text{clo }\langle(#1:#2 \times #3).#4, #5\rangle}
\newcommand{\Ehole}{[~]}
\newcommand{\Eplug}[2]{#1[#2]}
\newcommand{\cmp}[1]{\llbracket #1 \rrbracket}
\newcommand{\presNoArgs}{\textsf{Preserving}}
\newcommand{\pres}[3]{\presNoArgs(#1,#2,#3)}
\newcommand{\presExt}[4]{\presNoArgs_{#1}(#2,#3,#4)}
\newcommand{\bindCPS}[2]{\text{bind } #1 := #2;~}
\newcommand{\unpack}[3]{\text{let}~\langle #1,#2 \rangle := #3 ~\text{in}~}
\newcommand{\pack}[2]{\langle #1,#2 \rangle}
\newcommand{\tvec}{\text{vec~}}
\newcommand{\vnil}{\text{nil}}
\newcommand{\vcons}{\mathbin{::}}
\newcommand{\vapp}[1]{\text{app}~#1~}
\newcommand{\sctx}{\text{ctx}}
\newcommand{\sty}{\text{ty}}
\newcommand{\sexp}[2]{\text{exp}(#1,#2)}
\newcommand{\sval}[2]{\text{val}(#1,#2)}
\newcommand{\ssub}[2]{\text{sub}(#1,#2)}
\newcommand{\scmp}[1]{\text{cmp}(#1)}
\newcommand{\sconfig}[2]{\text{config}(#1,#2)}
\newcommand{\natv}{\text{nv }}
\newcommand{\Iskip}{\text{skip}}
\newcommand{\Iseq}[2]{#1;#2}
\newcommand{\Iassign}[2]{\text{assign }#1 := #2}
\newcommand{\Iwhile}[2]{\text{while }#1~ \{ #2 \}}
\begin{document}

\title{Pyrosome: Verified Compilation for Modular Metatheory}

\author{Dustin Jamner}
\orcid{0000-0003-0700-3514}
\affiliation{%
  \institution{Massachusetts Institute of Technology}
  \city{Cambridge}
  \state{Massachusetts}
  \country{USA}
}
\email{dijamner@mit.edu}

\author{Gabriel Kammer}
\affiliation{%
  \institution{Intel}
  \country{USA}
}
\email{gkammer@mit.edu}

\author{Ritam Nag}
\affiliation{%
  \institution{Massachusetts Institute of Technology}
  \city{Cambridge}
  \state{Massachusetts}
  \country{USA}
}
\email{rnag@mit.edu}

\author{Adam Chlipala}
\affiliation{%
  \institution{Massachusetts Institute of Technology}
  \city{Cambridge}
  \state{Massachusetts}
  \country{USA}
}
\email{adamc@csail.mit.edu}

\begin{abstract}
  We present Pyrosome,\footnote{Pyrosomes are tiny colonial organisms
  that connect to each other to form tube-shaped colonies
  up to 60 feet in length. In our framework, compilers are similarly made up of many small, independent components.} a generic framework for modular language metatheory
  that embodies a novel approach to extensible semantics and compilation, implemented in Coq.
  Common techniques for semantic reasoning are often tied to the specific structures of the languages
  and compilers that they support.
  In Pyrosome, verified compilers are fully extensible,
  meaning that to extend a language (even with a new kind of effect) simply requires defining and verifying the compilation of the new feature,
 reusing the old correctness theorem for all other cases.
  The novel enabling idea is an inductive formulation of equivalence preservation that supports the addition of new rules to the source language, target language, and compiler.

  Pyrosome defines a formal, deeply embedded notion of programming languages
  with semantics given by dependently sorted equational theories,
  so all compiler-correctness proofs boil down to type-checking and equational reasoning.
  We support vertical composition of any compilers expressed in our framework in addition to feature extension.
  As a case study, we present a multipass compiler
  from System F with simple references, through CPS translation and closure conversion.
  Specifically, we demonstrate how we can build such a compiler incrementally
  by starting with a compiler for simply typed lambda-calculus and adding natural numbers, the unit type, recursive functions, and a global heap, then extending judgments with a type environment and adding type abstraction,
  all while reusing the original theorems.
  We also present a linear version of the simply typed CPS pass
  and compile a small imperative language to the simply typed target to show how Pyrosome handles substructural typing and imperative features.
\end{abstract}

\begin{CCSXML}
<ccs2012>
<concept>
<concept_id>10011007.10011006.10011008</concept_id>
<concept_desc>Software and its engineering~General programming languages</concept_desc>
<concept_significance>500</concept_significance>
</concept>
<concept>
<concept_id>10003456.10003457.10003521.10003525</concept_id>
<concept_desc>Social and professional topics~History of programming languages</concept_desc>
<concept_significance>300</concept_significance>
</concept>
</ccs2012>
\end{CCSXML}

\ccsdesc[500]{Software and its engineering~General programming languages}
\ccsdesc[300]{Social and professional topics~History of programming languages}

\maketitle

\section{Introduction}

\label{sec:intro}

Compiler verification is a laborious process compared to unverified implementation.
Nevertheless, numerous researchers have taken on the challenge \cite{compcert,chlipala2010verified,cake}
due to the significant benefits of formal verification \cite{csmith}.
Compilers are one of the most critical classes of programs to which we can apply formal verification
due to their presence in almost every software pipeline.
To verify a full software system requires verifying any compilers involved \cite{lightbulb},
and even in unverified software stacks,
the volume of code that depends on a given compiler often warrants the work of providing strong guarantees.

Despite the clear desirability of formally verified compilers,
the overwhelming majority of compilers in use today are unverified.
Examining the current state of compiler verification,
existing efforts suffer from one or more limitations,
with the end result that a few common requirements of software systems cannot be satisfied by current techniques.
To begin with, existing projects' internal mechanics are typically closely tied to their chosen source languages
and often to their implementations.
As a result, augmenting their results to support more permissive linking and extensibility quickly grows difficult \cite{kang2016,patterson2019next}.
While recent work has made advances in high-to-low-level and multilanguage linking \cite{compcerto,patterson-semantics},
it typically does so by lowering all reasoning to a common target semantics
rather than by connecting source-level reasoning directly.

In general, much of the compilation literature revolves around establishing single, all-encompassing models
that form key dependencies of all steps of the verification process.
This pattern drives the aforementioned limitations since
correct compilers tend to be bespoke artifacts that revolve around their central models.
For example,
\citet{perconti2014verifying} proved a correctness result that supported linking with arbitrary target programs,
but their multilanguage approach builds all languages in the compiler into the statement of the correctness theorem, making it difficult to extend with new features.
This limitation is also the root cause of another key factor in the slow adoption of verified compilation.
Most major programming languages are living projects with routinely evolving specifications.
State-of-the-art verification efforts view source-language specifications as single, monolithic entities.
As a result, the kind of incremental improvements one might expect to see in an actively evolving language
pose a serious threat to attempts at formal verification.
While many such changes might be handled by effective automation and proof engineering,
the cross-cutting nature of specification changes makes it difficult to estimate the cost of updating the verification since there is no hard upper bound on the necessary revisions.

In this work, we present a foundational study demonstrating language-agnostic metatheory and techniques
that combine to enable fully extensible compiler verification, with a case study at the core-calculus level.
While we hope one day to expand our software artifact's scope to cover the full feature set required to implement a practically usable compiler,
our present aim is to lay out the definitions, metatheorems, and proof structures that support translation extensibility in the abstract and demonstrate their behavior on a small research calculus.
We focus on cross-language translation passes in particular because,
while intralanguage optimizations often make up the bulk of practical compilers,
from a specification standpoint, a generic theory of cross-language properties is the greater challenge.

The particular combination of design decisions embodied in Pyrosome includes among other things: choosing an answer to the question of what constitutes a programming language;  demarcating a well-behaved formulation of translation passes; and choosing how to formulate compiler correctness, including eschewing contextual equivalence.
These three decisions come together to enable strong metatheoretical principles that support our desired notion of extensibility.
We represent programming languages as Generalized Algebraic Theories (GATs) \cite{cartmell,QIITs},
which define program semantics via lists of judgments that generate well-formedness predicates and minimal equational theories.\footnote{As opposed to contextual equivalence, which is a maximal equational theory. \cite{plotkin1977lcf}}
The translations we primarily consider are type- and equivalence-preserving maps (GAT morphisms) that are defined pointwise over the constructs of the source language and so are amenable to extension.
Our overall contributions are as follows:
\begin{itemize}
\item We develop metatheory, methods, and mechanized tooling for building and combining language extensions and compilers in the setting of GATs.
\item We define an inductive characterization of well-formedness for cross-language translations encoded in Pyrosome
  and prove that it implies both type and equivalence preservation.
\item We characterize extensions of these specifications and compilers and prove theorems enabling separate
  reasoning about and combination of such extensions.
\item We implement a generic partial-evaluation pass to support the viability of optimizations in this context.
\item We build a multipass compiler for System F with recursive functions, existential types, and a global heap
  that transforms it into CPS and then performs closure conversion,
  by starting from a compiler for STLC and successively adding features via our extension theorems.
\end{itemize}

\section{Generalized Algebraic Theories}
\label{sec:GAT}

Generalized Algebraic Theories (GATs) \cite{cartmell,QIITs} lie at the core of Pyrosome's metatheory.
GATs are dependently typed algebraic theories, originally developed to capture dependent type theories.
Like simple algebraic theories, a GAT consists of a list of judgment rules introducing function symbols and equations.
However, unlike simple theories, in a GAT, each rule can depend not only on the prior function symbols but also on the prior equations to justify its well-formedness, as we will see later.
While recent work has used intrinsically typed structures \cite{QIITs},
in Pyrosome, we define all data (terms, sorts, rules, and languages) as simply typed ASTs
and encode this dependent structure in separate well-formedness predicates over these constructs
as \citet{cartmell} did
in order to keep our computation simply typed at the Coq level.

\begin{figure}
  \begin{mathpar}
    \inferrule{ }{\tnat ~\mathit{sort}} \and
    \inferrule{ }{0 : \tnat} \and
    \inferrule{n : \tnat}{ S n : \tnat} \and
    \inferrule{m : \tnat \and n : \tnat}{ m + n : \tnat} \and
    \inferrule{n : \tnat}{ 0 + n = n : \tnat}\and
    \inferrule{m : \tnat \and n : \tnat}{ S m + n = S (m + n) : \tnat}
  \end{mathpar}
  \caption{GAT for natural numbers}
  \label{fig:nat-GAT}
\end{figure}
We will use a brief example to explain the different forms of rules that make up a GAT.
\autoref{fig:nat-GAT} defines a GAT for natural numbers with addition.
The first rule declares that $\tnat$ is a sort of our theory.
This theory only uses one simple sort, so it could be expressed as an untyped algebraic theory,
but we will expand it shortly.
The next three rules declare $0$, $S$, and $+$ as 0-, 1-, and 2-argument terms of sort $\tnat$ respectively
and specify the sorts of each's arguments above the line.
The last two rules define the behavior of addition via equations on the symbols of the theory.
Equality between terms in a GAT is given by the reflexive, symmetric, transitive, congruence closure of the equations included in the GAT definition.

\begin{figure}
  \begin{mathpar}
    \inferrule{n : \tnat}{\tvec n ~\mathit{sort}} \and
    \inferrule{ }{\vnil : \tvec 0} \and
    \inferrule{m : \tnat \and n : \tnat \and v : \tvec n}{m \vcons v : \tvec S n} \and
    \inferrule{m : \tnat \and v : \tvec m \and n : \tnat \and v' : \tvec n}
              {\vapp v v' : \tvec (m + n)} \and
    \inferrule{n : \tnat \and v : \tvec n}{ \vapp {\vnil} v = v : \tvec n}\and
    \inferrule{i : \tnat \and m : \tnat \and v : \tvec m \and n : \tnat \and v' : \tvec n}
              {\vapp{(i \vcons v)} v' = i \vcons (\vapp v v') : \tvec (S m + n)}
  \end{mathpar}
  \caption{GAT for vectors}
  \label{fig:vec-GAT}
\end{figure}

Next we will demonstrate a GAT that makes use of dependent typing by adding more rules to this theory to develop a datatype of known-length vectors (of nats for simplicity).
As shown in \autoref{fig:vec-GAT}, we start with adding a second sort to our theory, the dependent sort of $n$-length vectors.
Among the subsequently defined vector term formers, $\vnil$, $n\vcons v$, and app, each specifies its length as a component of its sort.
Finally, we bestow app with computational behavior via equations that specify its action on $\vnil$ and $n\vcons v$ on the left.
This last rule demonstrates what we mentioned earlier about rules depending on prior equations.
If we calculate the natural type for the left-hand side of the equation,
we get $\tvec (S m + n)$ as per the annotation.
However, if we do the same for the right-hand side, we get $\tvec S (m + n)$ instead.
The reason we can still use this term at our desired type is by invoking the defining equation of addition from \autoref{fig:nat-GAT}, which the GAT allows implicitly.
While our simply typed case studies do not use this expressive strength,
it is critical to supporting our polymorphic extension.

\section{Simply Typed Lambda Calculus}
\label{sec:STLC}

We begin our case studies with the simply typed lambda calculus.
Some frameworks bake in a notion of object-language variables and substitutions \cite{fiore2022,Allais,Blanchette} so that user languages automatically inherit the behavior of binders,
which makes defining STLC a quick task.
While this convention is suitable for an important class of theories, many languages, for example linear languages or those with dynamic scope, do not fit within these models.
The expressiveness of GATs allows us to define not just the expressions but also the types, contexts, and judgment forms of a programming language as terms and sorts in a GAT,
so the object languages are free to define their own notions of substitution.
On the other hand, since such constructs exist internally to the theories,
GATs do not come equipped with a primitive notion of binding.
Thus, as the cost of this freedom, to define STLC we must first define a call-by-value substitution calculus.
However, once we have done so, we can reuse it as the basis for many languages.

We base the substitution calculus off of \citet{dybjer1995internal}
but modify it to represent call-by-value substitutions,
with a subset of the rules shown in \autoref{fig:subst}.
The first five rules declare sorts for object-language contexts, types, values, expressions, and substitutions.
We elide metavariable sorts of the forms $\Gamma : \sctx$ and $A : \sty$ in \autoref{fig:subst}
when they can be inferred from the sorts of the other metavariables,
although formally every metavariable in a rule has a specified sort.
The syntactic form $\eret v$ injects values into the sort of expressions.
Its name is inspired by fine-grained call-by-value calculi \cite{levy2003modelling},
although our example remains standard call-by-value.
We define an operation to append a type to a context and
write de Bruijn index $\underline{0}$ to represent the variable at the head of a context.
We also add syntactic forms for explicit substitutions \cite{abadi1989explicit} on expressions
and values and a substitution $\langle\gamma,v\rangle$ that appends a value to the head of a substitution.
We include two representative equations in \autoref{fig:subst}.
The first defines the behavior of expression substitution on $\eret v$:
it pushes under the syntactic form to apply a value substitution to the subterm.
The second equation defines part of the behavior of substitutions on de Bruijn indices,
specifically the property that index $\underline{0}$ projects the head value out of any substitution applied to it.
Other rules of the calculus handle features like weakening and substitution composition.

\begin{figure}
  \begin{mathpar}
    \inferrule{ }{\sctx ~\mathit{sort}} \and
    \inferrule{ }{\sty ~\mathit{sort}} \and
    \inferrule{\Gamma : \sctx \and A : \sty}{\sval \Gamma A ~\mathit{sort}} \and
    \inferrule{\Gamma : \sctx \and A : \sty}{\sexp \Gamma A ~\mathit{sort}} \and
    \inferrule{\Gamma : \sctx \and \Delta : \sctx}{\ssub \Gamma \Delta ~\mathit{sort}} \and
    \inferrule{ v : \sval \Gamma A}{\eret v : \sexp \Gamma A} \and
    \inferrule{\Gamma : \sctx \and A : \sty}{\Gamma,A : \sctx} \and
    \inferrule{\Gamma : \sctx \and A : \sty}{\underline{0} : \sval {(\Gamma,A)} A} \and
    \inferrule{e : \sexp \Gamma A \and \gamma : \ssub \Delta \Gamma}
              {e[\gamma] : \sexp \Delta A} \and
    \inferrule{v : \sval \Gamma A \and \gamma : \ssub \Delta \Gamma}
              {v[\gamma]_v : \sval \Delta A} \and
    \inferrule{\gamma : \ssub \Delta \Gamma \and v : \sval \Delta A}
              {\langle\gamma,v\rangle : \ssub \Delta {(\Gamma,A)}}\and
    \inferrule{v : \sval \Gamma A \and \gamma : \ssub \Delta \Gamma}
              {(\eret v)[\gamma] = \eret v[\gamma]_v : \sexp \Delta A} \and
    \inferrule{\gamma : \ssub \Delta \Gamma \and v : \sval \Delta A}
              {\underline{0}[\langle\gamma,v\rangle]_v = v : \sval \Delta A}
  \end{mathpar}
  \caption{CBV substitution calculus (selected rules)}
  \label{fig:subst}
\end{figure}

We can define STLC as an extension of this calculus by adding a few more rules, just like the vector GAT built on the naturals.
Since we will be working on extensions of the substitution calculus for most of our examples,
we will adopt a few fairly standard notations: we write $A : \sty$ as $\vdash A$,
$e : \sexp \Gamma A$ as $\Gamma \vdash e : A$,
$v : \sval \Gamma A$ as $\Gamma \vdash_v v : A$, and
$\gamma : \ssub \Delta \Gamma$ as $\gamma : \Delta \Rightarrow \Gamma$.
As before, we elide type and context sorts when they can be inferred.
Additionally, we will write subsequent rules as if they were based on a named, rather than de Bruijn-indexed, substitution calculus
for readability.
With that preamble, the base rules of STLC are shown in \autoref{fig:STLC}.
We add three new terms to the syntax: the function type $A \to B$, application expressions $e~e'$, and function values $\elam x A e$.
The subsequent equations define $\beta$-reduction and substitution on applications and functions.
We can (and usually should) also include an $\eta$-expansion equation.
However, the benefit of extensibility is that we treat that variation as just another extension of this base calculus.
It is up to the user whether they intend to support $\eta$ laws, and they can be included or omitted as needed.

\begin{figure}
  \begin{mathpar}
    \inferrule{\vdash A \and \vdash B}{\vdash A \to B} \and
    \inferrule{\Gamma \vdash e : A \to B \and \Gamma \vdash e' : A}{\Gamma \vdash e~e' : B} \and
    \inferrule{\Gamma,x:A \vdash e : B}{\Gamma \vdash_v \elam x A e : A \to B} \and
    \inferrule{\Gamma,x:A \vdash e : B \and \Gamma \vdash_v v : A}
              {\Gamma \vdash (\eret \elam x A e)~ (\eret v) = e[v/x] : B} \and
    \inferrule{\Gamma \vdash e : A \to B \and \Gamma \vdash e' : A \and \gamma : \Delta \Rightarrow \Gamma}
              {\Delta \vdash (e~e')[\gamma] = e[\gamma]~e'[\gamma] : B} \and
    \inferrule{\Gamma,x:A \vdash e : B \and \gamma : \Delta \Rightarrow \Gamma}
              {\Delta \vdash_v (\elam x A e)[\gamma] = \elam x A e[\gamma,x/x]: A \to B}
  \end{mathpar}
  \caption{STLC}
  \label{fig:STLC}
\end{figure}

\subsection{Substitution-Equation Generation}
\label{sec:STLC:sub}
While substitution equations like those for applications and functions are numerous,
they follow a prescribed form.
One benefit of using a deeply embedded representation of GATs is that we can do language metaprogramming in Gallina (Coq's dependently typed metalanguage), which we use to generate such equations automatically rather than write them by hand.
In doing so, we follow a substantial body of research.
For example, the Autosubst line of work used first Ltac and then an external code generator to generate substitution functions for user-defined inductive datatypes \cite{schafer2015autosubst,stark2019autosubst}.
We handle the related situation in our Pyrosome case studies by programmatically generating equations that define the behavior of the explicit substitution syntax from language syntax rules.
For example, the final two equations in \autoref{fig:STLC} are actually autogenerated by this procedure, not handwritten.
While the resulting rules form part of the compiler’s trusted specification, they can be audited straightforwardly.
We apply this machinery to all extensions of the substitution calculus
and additionally port it to the corresponding constructs in the linear substitution calculus used in \autoref{sec:case-study:linear}.

\section{Language Extensions}
In \autoref{sec:GAT} and \autoref{sec:STLC}, we informally described the vector GAT as an extension of the naturals GAT and STLC as an extension of substitution by means of adding rules to the base GAT.
Fortunately, a formal accounting is intuitive: Since GATs are determined by lists of rules, we can express the addition of an extension as the concatenation of the new rules onto the old ones.
The key benefit of GATs for our purposes is that all the well-formedness and equivalence judgments of a GAT are preserved by language extension.
We formalize this family of properties in \autoref{thm:lang-mono}, where $L \subseteq L'$ denotes the unordered inclusion of all rules of $L$ in $L'$.
Here and in subsequent metatheory, we write $C \vdash_L Q$ to mean ``$Q$ is derivable from $C$ in language $L$,'' where $C$ is a dependent list of metavariable sort declarations.

\begin{theorem}[Monotonicity under language extension]
  \label{thm:lang-mono}
  Let $L$ and $L'$ be languages such that $L \subseteq L'$.
  \begin{itemize}
    \item If $C \vdash_L s ~\mathit{sort}$, then $C \vdash_{L'} s ~\mathit{sort}$
    \item If $C \vdash_L t : s$, then $C \vdash_{L'} t : s$
    \item If $C \vdash_L t_1 = t_2 : s$, then $C \vdash_{L'} t_1 = t_2 : s$
  \end{itemize}
\end{theorem}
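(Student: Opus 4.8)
The plan is to prove all three statements simultaneously by mutual induction on the derivation of the premise, since the well-formedness judgments for sorts, terms, and equations in a GAT are mutually inductively defined. This is the natural proof strategy: each judgment form has a set of inference rules, and the three monotonicity statements correspond exactly to the three kinds of conclusion those rules can produce. I would set up the induction so that the inductive hypotheses give me monotonicity for every sub-derivation appearing in the premises of whatever rule was used last.

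First I would enumerate the inference rules that generate these judgments. These fall into two broad families. The structural/congruence rules (reflexivity, symmetry, transitivity, and congruence for each syntactic former, together with the sorting rules that check a term against its declared sort) refer only to sub-derivations, so for each such rule the inductive hypotheses immediately supply the corresponding $L'$-derivations, and I simply reapply the same rule in $L'$. The only genuinely interesting rules are the \emph{axiom} or \emph{lookup} rules: the base case where a judgment $C \vdash_L Q$ holds because the rule witnessing $Q$ (a sort declaration, a term former with its argument sorts, or an equation) is literally a member of the rule list of $L$. Here I would use the hypothesis $L \subseteq L'$: since every rule of $L$ is also a rule of $L'$, the same membership fact holds in $L'$, and the same axiom rule fires there.

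The step I expect to be the main obstacle is handling the dependency of each rule's well-formedness on the rules and equations that precede it. Because a GAT rule may be justified using earlier equations (as highlighted by the vector example, where the app-cons equation typechecks only by appealing to the addition equation of the naturals GAT), I must be careful that when I invoke an axiom in $L'$ I can still discharge the side conditions that validated it in $L$. Concretely, the premises of the lookup rule typically require that the argument sorts and the context $C$ be well-formed, and that any implicit sort conversions be justified by equations in the theory; all of these are themselves $L$-derivations, so the mutual inductive hypotheses convert them to $L'$-derivations, and the implicit conversions survive precisely because the equational judgment is one of the three statements being proved. The delicate point is ensuring the induction is genuinely mutual across all three judgment forms and that it is well-founded on derivation height rather than on the structure of terms or languages, so that these cross-references between sorting, typing, and equality are all available as hypotheses.

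Two bookkeeping points round out the argument. Since $C$ is held fixed and $L \subseteq L'$ is unordered inclusion, I do not need to reason about the order of rules in $L'$; the deep embedding means membership is just list membership up to permutation, which is exactly what the congruence and axiom rules consume. Finally, I would note that in the Coq development this proof is a routine mutual induction whose only manual content is the axiom case, with the structural cases dispatched by reapplying constructors; the conceptual weight lies entirely in having phrased GAT well-formedness so that rule usage is monotone in the ambient rule list.
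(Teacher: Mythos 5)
Your proposal matches the paper's proof exactly: a mutual induction over all GAT judgment forms, where the structural cases reapply their constructors and the only cases that consult $L$ are membership/lookup cases, discharged directly by $L \subseteq L'$. The additional bookkeeping you discuss (dependence of rules on earlier equations, unordered inclusion) is consistent with, and merely elaborates on, the paper's one-line argument.
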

\begin{proof}
  By mutual induction on all GAT judgments, noting that the cases that make use of $L$ only check for inclusion.
\end{proof}

It is important to remember that the order of rules matters in determining whether a language is well-formed because the well-formedness of each rule might depend on the constructors and equations defined in earlier rules, as we saw with the defining equations for the vector append operation relying on the equations for addition to justify their well-formedness.
However, while the order of a language determines whether it is well-formed, it does not affect its semantics, as demonstrated by our use of list inclusion in \autoref{thm:lang-mono}.
In general, a language remains well-formed under any permutation that produces a topological sort of its dependency graph,
so independent extensions can be swapped freely in a judgment via this family of properties.
For example, consider an extension of the substitution calculus that adds Boolean values $\tru$ and $\fls$ as well as the conditional expression $\ite{e}{e_1}{e_2}$.
Both Booleans and STLC are extensions of substitution, but neither extension depends on the other, so they can be reordered arbitrarily.

\subsection{Equational Theories, Contextual Equivalence, and Extensibility}
\label{sec:ext:ctx-eq}

As we have shown, language semantics in Pyrosome are given directly by equational theories
rather than by defining an operational semantics and deriving contextual equivalence from it,
as is often done in the literature on verified and/or secure compilation \cite{ahmed2011equivalence,devriese2016fully}.
This choice is necessary if we want to have \autoref{thm:lang-mono}.
Contextual equivalence is the maximal equivalence that respects an operational semantics \cite{plotkin1977lcf};
it equates every pair of programs it can while still respecting reduction.
Equational theories on the other hand represent smaller equivalences,
namely the smallest congruent equivalences that respect the chosen rules.
In practice, the axioms of a language in Pyrosome are usually just its reduction rules,
possibly with the inclusion of $\eta$ laws.
Thus, equivalence in Pyrosome relates two programs if and only if their behavior must be equivalent in every implementation and semantic model of the language.

To demonstrate the difference, we will consider the following simple example program, which elides ret for convenience of presentation:
\begin{eqnarray*}
  \mathit{callTwice} &:& (\bool\to \bool) \to \bool \to \bool \\
  \mathit{callTwice} &\triangleq& \elam f {\bool\to \bool}\elam x \bool \\
  && (\elam {\_} \bool f~ x)~ (f~ \fls)
\end{eqnarray*}
When passed $f$ and $x$, it first calls $f$ on $\fls$ and then returns $f~x$.
In STLC, $\mathit{callTwice}$ is contextually equivalent to the identity function $\elam f {\bool \to \bool} f$
since STLC is a pure, terminating language.
However, consider how this program behaves once we add the recursive-function extension in \autoref{fig:rec}.
This extension adds a new value, $\efix f x A e$, that represents a recursive function that may refer to itself as $f$ in its body $e$.
It also adds a $\beta$-reduction rule for interacting with the existing function-application operation from STLC.
With this extension, we can apply $\mathit{callTwice}$ to functions like $\efix f x \bool {\ite x{\tru}{f~x}}$,
which is the identity function on the input $\tru$ but diverges on the input $\fls$.
Since $\mathit{callTwice}$ always starts by applying $f$ to $\fls$, it turns this function from one that terminates on one of two inputs
to one that never terminates, so $\mathit{callTwice}$ is no longer contextually equivalent to the identity function.
Since language extension can so definitively break contextual equivalence, the latter is not a suitable property for Pyrosome to consider.
On the other hand, \autoref{thm:lang-mono} tells us that any pair of functions that we can equate in the GAT for STLC
will still be equivalent once we add recursion or indeed any other extension.

\begin{figure}
  \begin{mathpar}
    \inferrule{\Gamma,f : A \to B,x:A \vdash e : B}{\Gamma \vdash_v \efix f x A e : A \to B} \and
    \inferrule{\Gamma,f : A \to B,x:A \vdash e : B \and \Gamma \vdash_v v : A}
              {\Gamma \vdash (\eret \efix f x A e)~\eret v = e[(\efix f x A e)/f,v/x]:B}
  \end{mathpar}
  \caption{Recursive-function extension}
  \label{fig:rec}
\end{figure}

\section{Cross-Language Translation}

Now that we have established how languages are defined and extended, we can proceed to their compilers.
We define our translations as finite maps from source-language sort and term names
to target-language sorts and terms, which form strict morphisms between GATs.
These finite maps are then folded over a source term, looking up each constructor, to translate it.
Then, just as we can extend our languages by appending new rules, we can extend our
compilers by appending new mappings.
This style formalizes patterns that appear in prior work on compositional compiler correctness
\cite{perconti2014verifying,new2016fully,mates2019under}.

This paper primarily addresses cross-language translations in order to focus on challenges particular to cross-language statements of correctness that can be circumvented in the intralanguage case, for example for optimization passes.
Specifically, there is a central tension in the area of compiler correctness between expressiveness and modularity,
which extends to what translations we consider in this work.
In order to fully support both linking and compiler extension,
we require compilers to operate on terms with arbitrary free metavariables.
Furthermore, we want compilation to be invariant under substitution of metavariables,
i.e. $\cmp{\gamma(e)}=\cmp\gamma(\cmp{e})$,
where $\cmp e$ denotes the compilation of $e$ and $\gamma(e)$ denotes the metavariable substitution $\gamma$ applied to the term $e$.\footnote{Unlike object-language substitution, metavariable substitutions $\gamma(e)$ are a Gallina computation on ASTs rather than another syntactic form. }

We choose to guarantee that this equation holds syntactically.
While we could likely work through much of Pyrosome's metatheory using a superficially more flexible semantic equality,
it is our observation that such a change would do little to change the expressiveness of our compilers. For each term former $f$ of arity $n$,
such a compiler would have to define $\cmp{f~x_1...x_n}$, where $x_1...x_n$ are metavariables.
Then even if the compiler were to inspect subterms deeply when compiling a more complex term $\cmp{f~e_1...e_n}$ for subterms $e_1...e_n$, it would have to produce a result semantically equivalent to $\cmp{f~x_1...x_n}[\cmp{e_1}/x_1,...,\cmp{e_n}/x_n]$.
Since we expect optimization to be handled by intralanguage passes rather than our translations anyway, we choose the simpler option of taking the above term to be the definition of compilation.

\subsection{CPS}
\label{sec:CPS}

\begin{figure}
  \begin{mathpar}
    \inferrule{\Gamma : \sctx}{\scmp \Gamma ~\mathit{sort}} \and
    \inferrule{\vdash A}{\vdash \lnot A} \and
    \inferrule{\Gamma \vdash_v v : \lnot A \and \Gamma \vdash_v v' : A}{\Gamma \vdash v~v'} \and
    \inferrule{\Gamma,x:A \vdash e}{\Gamma \vdash_v \elam x A e : \lnot A} \and
    \inferrule{\Gamma,x:A \vdash e \and \Gamma \vdash_v v : A}
              {\Gamma \vdash (\elam x A e)~v = e[v/x]}\and
    \inferrule{\Gamma \vdash v : \lnot A}
              {\Gamma \vdash_v (\elam x A {v~x}) = v : \lnot A} \and
    \inferrule{\vdash A \and \vdash B}{\vdash A \times B} \and
    \inferrule{\Gamma \vdash_v v : A \times B \and \Gamma, x : A, y : B \vdash e}
              {\Gamma \vdash \pmPair v x y e} \and
    \inferrule{\Gamma \vdash_v v_1 : A \and \Gamma \vdash_v v_2 : B}
              {\Gamma \vdash_v (v_1, v_2) : A \times B} \and    
    \inferrule{\Gamma \vdash_v v_1 : A \and \Gamma \vdash_v v_2 : B \and \Gamma, x : A, y : B \vdash e}
              {\Gamma \vdash \pmPair {(v_1, v_2)} x y e = e[v_1/x,v_2/y]} \and
  \end{mathpar}
  
  \[
  \begin{array}{rcl}
    \cmp{\sval \Gamma A} &\triangleq& \sval {\cmp \Gamma} {\cmp A}\\
    \cmp{\sexp \Gamma A} &\triangleq& \scmp {\cmp \Gamma, k : \lnot \cmp A}\\
    \cmp{A \to B} &\triangleq& \lnot {(\cmp A \times \lnot \cmp B)}\\
    \cmp{\eret v} &\triangleq& k~\cmp v\\
    \cmp{\elam x A e} &\triangleq&
    \elam p {\cmp A \times \lnot \cmp B} {\pmPair p x k \cmp e}\\
    \cmp{e~e'} &\triangleq&
      \bindCPS x {\cmp e}
      \bindCPS y {\cmp {e'}} x~\langle y,k\rangle \text{, where } \\
      &&\bindCPS x e e' \triangleq {e}[{\elam x {B} {e'}}/k]
      \text{ given } \Gamma,k:\lnot B \vdash e\\
  \end{array}
  \]
  \caption{Continuation calculus (selected rules) and CPS translation for STLC}
  \label{fig:cps-lang}
\end{figure}

Our case-study compilers closely follow the designs of the first two passes of \citet{morrisett1999system}, albeit adapted to start with a simply typed compiler before adding in polymorphism.
The CPS translation targets a calculus of continuations, shown in the top half of \autoref{fig:cps-lang}.
This calculus extends the values portion of the base substitution calculus.
Since CPS computations do not return, we replace the expression sort, instead extending the value-substitution calculus
with a computation sort $\scmp \Gamma$ that eliminates the return type,
writing $\Gamma \vdash e$ to indicate a well-formed computation $e : \scmp \Gamma$.
We define $\lnot A$ as the type of a nonreturning continuation that accepts input of type $A$,
named to evoke the definition $\lnot A \triangleq A \to \bot$ and the connection between CPS and double negation.
We also include a positive product-types extension in the CPS target.

With the target out of the way, we can define our first translation.
The second half of \autoref{fig:cps-lang} gives the translation from STLC into
our CPS calculus.
Formally, we encode this translation as an association list mapping term and sort formers of STLC to terms and sorts in the continuation calculus
that may have free metavariables corresponding to the context of the rule associated to the term or sort former.
For example, the $\eret v$ case above represents a pair in the association list
containing the symbol $\eret\!$ and the output term $k~\cmp v$ where the $\cmp v$ on the right-hand side
is a metavariable that will be substituted with the result of compiling the subterm in the $v$ position
when we run the compiler.

To bridge the gap between source-language expression judgments
and target-language computations, we translate expression judgments $\Gamma \vdash e : A$ to
computation judgments $\cmp{\Gamma}, k : \lnot \cmp A \vdash \cmp e$, using the variable $k$ for the continuation.
Functions $A \to B$ map to computations $\lnot {(\cmp A \times \lnot \cmp B)}$,
each taking as inputs the compiled argument $\cmp A$ and the continuation $\lnot \cmp B$.
Since we explicitly represent the injection from values to expressions with the syntax $\eret v$ in the source,
the standard CPS translation of values is split in two:
the translation for $\eret v$ is responsible for calling the continuation,
and the translation for function values (and other value forms as we add them) handles mediating between source and target values.

The compiler case for function application debuts a useful macro, bind, that we will employ in most of our CPS-pass extensions as well.
We define $\bindCPS x e e'$ as the expression substitution that passes a function from $x$ to $e'$ as the continuation to $e$, which satisfies the monad laws implied by its name
in the theory of the continuation calculus.
Most importantly, $(\bindCPS x {\cmp{\eret v}} e') = e'[\cmp v/x]$.
Especially when there are multiple binds, as in function application, we find this notation far more readble than nested, inverted substitutions.

\subsection{Semantics Preservation}
\label{sec:sem-pres}

So far we have defined a few languages and a minimal compilation pass, but we have yet to state or prove theorems.
There are many interpretations of compiler correctness in the literature \cite{patterson2019next}.
The original CompCert \cite{compcert} work proved whole-program simulation
of the source language by the target language and used that relation
to show trace refinement.
Simulation of closed programs is inherently vertically compositional,
which allowed CompCert's proof of correctness to be divided cleanly by pass.
However, modern software relies on linking code from multiple sources,
which a correctness property of this form does not cover \cite{ML-verif-comp}.
Recent developments in the CompCert ecosystem such as CompCertO \cite{compcerto}
support linking, and \citet{10.1145/3632914} improve the state of affairs by properly encapsulating intermediate semantics.
However, they each describe interaction through a subsuming semantic model designed with C-like languages in mind.
This choice does reflect modern languages, which often interoperate with each other through a C interface.
However, we view the state of FFIs as the unfortunate result of the current limitations of language specifications and interfaces, which we believe can be improved via the principles of Pyrosome.
To put it another way, we believe that with proper language and compiler modularity, language interoperation can be freed of its dependence on shared models like C.
Other work on verified compiler correctness solves the linking problem for programs
that share a specific low-level target language~\cite{cito}.
It is not clear how to modify such systems to allow modular addition of new target-language features
or features that extend past the design of the original semantic model.

Pyrosome is designed for proving that compilers are type- and equivalence-preserving
with respect to source and target equational theories, which we express formally in the following definition:
\begin{definition}[Semantics Preservation]
  \label{def:sem-pres}
  A function $\cmp -$ is a \emph{semantics-preserving} compiler from source $S$ to target $T$
  if all of the following hold\footnote{An additional property for preservation of sort equalities is actually included in our artifact, but we do not include it here since we have yet to make use of sort equations, and so it is entailed by these three.}:
  \begin{itemize}
    \item If $C \vdash_S s ~\mathit{sort}$ then $\cmp C \vdash_{T} \cmp s ~\mathit{sort}$
    \item If $C \vdash_S t : s$ then $\cmp C \vdash_{T} \cmp t : \cmp s$
    \item If $C \vdash_S t_1 = t_2 : s$ then $\cmp C \vdash_{T} \cmp{t_1} = \cmp{t_2} : \cmp s$
  \end{itemize}
\end{definition}

We claim that these properties are the critical ones for notions of compiler correctness.
While cross-language relations are more common in the literature,
they typically depend on language-specific relations between observable values or states,
which is less than ideal for generic metatheory.
Furthermore, they can be recovered quickly from semantics preservation when desired.
It turns out that cross-language specifications can readily be reduced to equivalence preservation and a standard inversion lemma.

For example, consider extending the CPS translation to compile Booleans in the source to naturals in the target.
We add the rules $\cmp{\tru} = \text{nv}~ 1$ and  $\cmp{\fls} = \text{nv}~ 0$ where nv injects a separate sort of natural numbers into the sort of values at type $\tnat$.
Note that there are two reasonable cross-language value relations, which we will write as $v_s \sim v_t$,
that one might choose as the specification for a compiler with these rules.
Both relations agree that $\fls \sim \text{nv}~ 0$,
but it would be reasonable either to decide $\tru \sim \text{nv}~ 1$ and no other naturals are allowed,
or to add $\tru \sim \text{nv}~ n$ where $n \neq 0$.
For this example, we will arbitrarily choose the first specification.
Such decisions form an inherent, language-specific component of cross-language compiler specifications and so are unsuitable for, and do not benefit from, inclusion in agnostic scaffolding like Pyrosome.

For now, assume that the CPS translation is semantics-preserving.
The components necessary to bridge to a cross-language relation are inversion and relatedness of values,
precisely the language-specific parts that do not benefit from an agnostic framework.
Additionally, they only deal with the basic properties of values, so lifting to expressions is entirely handled by equivalence preservation.

\begin{theorem}[CPS Cross-Language Correctness]
  \label{thm:cps-cross}
  If $\vdash e = \eret b : \bool$ in STLC + Bool, then $k : \lnot \bool \vdash \cmp e = k~n$ in CPS + Nat
  and $b \sim n$.
\end{theorem}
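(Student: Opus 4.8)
The plan is to obtain the target equation directly from equivalence preservation and then resolve the compiled value using source-level canonical forms. First I would apply the third clause of \autoref{def:sem-pres}---which we are assuming holds for the CPS pass---to the hypothesis $\vdash e = \eret b : \bool$. Since this is an equality of closed expressions, i.e.\ an equation with empty metavariable context at sort $\sexp\emptyset\bool$, equivalence preservation yields $\cmp e = \cmp{\eret b}$ in CPS + Nat at the compiled sort $\cmp{\sexp\emptyset\bool}$, which unfolds (via the sort-translation clauses of \autoref{fig:cps-lang}) to the computation judgment with continuation $k : \lnot\cmp\bool$. Unfolding the compiler clause $\cmp{\eret v} = k~\cmp v$ rewrites the right-hand side to $k~\cmp b$, so at this point I have $k : \lnot\cmp\bool \vdash \cmp e = k~\cmp b$. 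This single step is exactly the ``lifting to expressions'' that the generic framework supplies for free.

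Next I would invoke the language-specific inversion lemma for closed Boolean values: in STLC + Bool, any $b$ with $\vdash_v b : \bool$ satisfies $\vdash_v b = \tru : \bool$ or $\vdash_v b = \fls : \bool$. Feeding each of these value equations back through the equivalence-preservation clause of \autoref{def:sem-pres}, together with the translation rules $\cmp{\tru} = \natv 1$ and $\cmp{\fls} = \natv 0$, shows that $\cmp b$ is provably equal in the target to $\natv 1$ or $\natv 0$. I would then take $n \triangleq \cmp b$, so that $\cmp e = k~\cmp b = k~n$ discharges the first conclusion with $n$ a natural-number value, as required.

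Relatedness of values then closes the gap. In the case $\vdash_v b = \tru : \bool$ we have $n = \natv 1$ and $\tru \sim \natv 1$ by the chosen specification of $\sim$, and symmetrically $\vdash_v b = \fls : \bool$ gives $n = \natv 0$ with $\fls \sim \natv 0$. Since $\sim$ is defined precisely to agree with the Boolean compilation rules, $b \sim n$ holds in both cases, completing the argument.

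The main obstacle is the inversion lemma itself. Because Pyrosome assigns semantics through an equational theory rather than an operational one, ``inversion'' here is really a canonicity statement---that no closed value of type $\bool$ can be provably distinct from both $\tru$ and $\fls$---and its proof must contend with the value formers of the substitution calculus (explicit substitutions, weakening, de Bruijn projection, and so on) rather than a simple case split on a normal form. This is exactly the language-specific reasoning that, as the surrounding discussion emphasizes, does not benefit from agnostic scaffolding; everything else reduces mechanically to equivalence preservation and unfolding compiler clauses.
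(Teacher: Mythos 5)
Your proposal matches the paper's proof essentially step for step: apply equivalence preservation to get $\cmp e = \cmp{\eret b}$, use the language-specific inversion (canonicity) lemma on the closed value $b$ to split into the $\tru$ and $\fls$ cases, then evaluate the compiler clauses to $k~(\natv 1)$ or $k~(\natv 0)$ and conclude relatedness from the definition of $\sim$. The only differences are presentational (you unfold $\cmp{\eret b}$ before doing inversion, the paper after), and your closing remark that inversion is the genuinely language-specific ingredient is exactly the point the paper makes in the surrounding discussion.
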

\begin{proof}
  By equivalence preservation, $k : \lnot \bool \vdash \cmp e = \cmp{\eret b}$.
  By inversion, since $b$ is closed, we have two cases:
  $\vdash_v b = \tru : \bool$ and $\vdash_v b = \fls : \bool$.
  In the first case, $\cmp{\eret \tru}$ evaluates to $k~(\text{nv}~ 1)$ and $\tru \sim \text{nv}~ 1$, so we conclude.
  The second proceeds analogously.
\end{proof}

\subsection{The $\presNoArgs$ Predicate}

Now that we have established semantics preservation as our goal, it remains to establish a standard method for proving it
and for doing so extensibly.
A significant contribution of this work is our definition of a predicate over compilers that follows the structure of the source language and a proof that this predicate implies semantics preservation.
To demonstrate why this predicate is essential to our extensibility results,
consider a standard proof of equivalence preservation for our CPS pass for base STLC.
We assume type preservation in this example for simplicity and walk through the term-equivalence case.
Given two STLC expressions $e_1$ and $e_2$ such that $\Gamma \vdash e_1 = e_2 : A$,
we have to show that $\cmp\Gamma, k : \lnot\cmp A \vdash \cmp{e_1} = \cmp{e_2}$ in the continuation calculus.
We proceed by induction on the proof of $\Gamma \vdash e_1 = e_2 : A$,
which requires us to consider reflexivity, transitivity, symmetry, congruence, and $\beta$-reduction.
The first three hold either trivially or by the inductive hypothesis.
Congruence requires that the compiler be a homomorphism with respect to substitution,
which we have by the structure of translations in Pyrosome.
What remains to prove is the $\beta$-reduction case, where we must prove that the right- and left-hand sides
compile to equivalent terms, although we omit the proof here.

If we examine the structure of this proof sketch,
the only case that depends on the rules of STLC or the definition of the CPS pass is $\beta$-reduction.
The rest can be proven using invariants common across all Pyrosome languages and compilers.
Since they are independent of the compiler under consideration, we can prove them generically,
so we focus on equations from the source-language definition like $\beta$-reduction.
Consider how the proof must be extended to add a new feature, for example product types.
Intuitively, the proof case for $\beta$-reduction should remain the same,
and we must add new cases for the first and second projections out of a pair.
Such an extension is logically straightforward, so we would expect it to be reasonable to formulate mechanically.

However, the structure of our theorem statement, that for all STLC terms,
or for all terms made of product constructs,
compilation preserves equivalence,
is ill-suited to such extension
since it tells us nothing about terms with a mix of product operations and STLC constructs.
To properly reuse our proof about STLC, we need to encapsulate it in a lemma
that can be extended inside the induction on equivalence proofs.

To describe the proof obligations associated with a given source language and compiler,
we define the inductive predicate $\pres{L_t}{cmp}{L_s}$,
which takes as input a target language $L_t$, compiler $cmp$, and source language $L_s$.
This predicate has one constructor for each kind of rule that can be appended to $L_s$.
Each constructor requires that the compiler satisfies $\presNoArgs$ for the tail of $L_s$
plus the appropriate condition for the head depending on the kind of rule:
\begin{itemize}
\item For a sort constructor, the compiler must map it to a well-formed sort in the target.
\item For a term constructor, the compiler must map it to a well-formed term in the target.
\item For an equation, the compiler must map the left- and right-hand sides to equivalent terms in the target.
\end{itemize}

\begin{figure}
  \footnotesize \begin{verbatim}
Context (cmp_pre : compiler).
Inductive preserving_compiler_ext : compiler -> lang -> Prop :=
  | preserving_compiler_nil : preserving_compiler_ext [] []
  ...
  | preserving_compiler_term : forall cmp l n c args e t,
      preserving_compiler_ext cmp l ->
      Model.wf_term (compile_ctx (cmp ++ cmp_pre) c) e (compile_sort (cmp ++ cmp_pre) t) ->
      preserving_compiler_ext ((n, term_case (map fst c) e) :: cmp)
                              ((n, term_rule c args t) :: l)
  | preserving_compiler_term_eq : forall cmp l n c e1 e2 t,
      preserving_compiler_ext cmp l ->
      Model.eq_term (compile_ctx (cmp ++ cmp_pre) c)
              (compile_sort (cmp ++ cmp_pre) t)
              (compile (cmp ++ cmp_pre) e1)
              (compile (cmp ++ cmp_pre) e2) ->
      preserving_compiler_ext cmp ((n, term_eq_rule c e1 e2 t) :: l).
\end{verbatim}
\caption{Coq definition of $\presNoArgs$ (selected cases)}
\label{fig:pres}
\end{figure}

We present the formal definition of $\presNoArgs$ in \autoref{fig:pres},
with the sort cases elided for space as they mirror the term cases.
Ignoring $\texttt{cmp\_pre}$ for the moment, the definition closely matches the prose description.
The empty compiler translates the empty language.
When the top rule of a language defines a new term former named $\texttt{n}$, 
the compiler must map $\texttt{n}$ to a term $\texttt{e}$
such that $\texttt{e}$ is well-formed at a sort and context determined by applying the compiler to the sort and context of the rule.
The rest of the compiler and rest of the language must then satisfy $\presNoArgs$.
In the case of an equality rule, there is no new case of the compiler.
However, there is still a proof obligation: the user must show that the terms on the left and right of the equality,
when compiled to the target language, remain equal.
One significant benefit of working with terms with free metavariables is that this obligation is not quantified at the Coq level, so proving it just requires a single target-language derivation about concrete programs.

The formal definition above features an additional parameter $\texttt{cmp\_pre}$ because it is actually a generalization of $\presNoArgs$ from a whole-pass predicate to a pass-extension predicate.
Let $cmp_{pre}$ be a compiler from source language $L_{pre}$ to target $L_t$,
and let $L_s$ be an extension of $L_{pre}$.
We will write $\presExt{cmp_{pre}}{L_t}{cmp}{L_s}$ for the generalized predicate,
which states that ${cmp}$ is a semantics-preserving compiler extension of $cmp_{pre}$
supporting source extension $L_s$.
We can then define $\pres{L_t}{cmp}{L_s}$ as $\presExt{[]}{L_t}{cmp}{L_s}$, the case where the base language is empty.

The key to proving this property is that each case of $\presNoArgs$
only relies on earlier cases' language rules and mappings in the compiler.
Since the sorts, terms, and contexts that make up each rule of a language
can only reference constructs from earlier rules (those in the tail of the list),
it is sufficient at each rule to be able to compile only the previous constructors.
For example, when compiling STLC using the CPS compiler,
the proof obligation for lambdas references a prefix of the compiler that does not include application,
since the application rule comes later in the language specification.
This ordering is also essential to the modularity of $\presNoArgs$ since it conversely means that proofs of earlier obligations
remain valid as the compiler is extended.

\begin{theorem}[$\presNoArgs$ implies semantics preservation]
  \label{thm:ind-impl-sem}
  Let $L_s$ and $L_t$ be well-formed languages, and let $cmp$ be a compiler.
  If $\pres{L_t}{cmp}{L_s}$, then $cmp$ is a semantics-preserving compiler
  from $L_s$ to $L_t$.
\end{theorem}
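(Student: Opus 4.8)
The plan is to prove all three clauses of \autoref{def:sem-pres} simultaneously by mutual induction on the source GAT derivations, using exactly the induction principle that established \autoref{thm:lang-mono}. For each rule of the GAT judgment system I must show that applying $\cmp{-}$ to its conclusion yields a valid target judgment, given inductive hypotheses for its premises. The cases split into two families, matching the observation that motivated $\presNoArgs$: generic structural cases that hold for any Pyrosome compiler, and rule-use cases that invoke a rule of $L_s$ and must appeal to $\pres{L_t}{cmp}{L_s}$.

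First I would dispatch the generic cases. Reflexivity, symmetry, and transitivity of equivalence follow because target equivalence is itself an equivalence relation, combined with the inductive hypotheses. The variable and context-lookup cases hold because $\cmp{-}$ maps a context pointwise to the compiled context, so a variable keeps its position and is assigned the compiled sort. The congruence cases, where pairwise-equal arguments must yield equal compound terms, reduce to congruence in the target together with the fact that $\cmp{-}$ is defined by substituting compiled subterms into the compiler's image of the head symbol; concretely this is the homomorphism invariant $\cmp{\gamma(e)}=\cmp\gamma(\cmp{e})$, which holds syntactically by construction of Pyrosome translations. None of these cases depends on the particular rules of $L_s$ or the particular mappings of $cmp$, so each is proved once and reused.

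The substantive work is in the rule-use cases: a sort former, a term former, or an equation drawn from $L_s$. For these I would first establish a lookup lemma inverting the predicate of \autoref{fig:pres}: from $\pres{L_t}{cmp}{L_s}$, for every rule named $n$ in $L_s$ one can extract both the corresponding entry of $cmp$ and the target obligation recorded for it, namely a well-formedness judgment for sort and term formers, or a target equality $\cmp{t_1}=\cmp{t_2}$ for equations. This lemma is itself a straightforward induction peeling rules off $L_s$ in lockstep with the constructors of $\presNoArgs$; its content is exactly that each obligation refers only to earlier rules, so compiling the rule's context and sort under any later-extended compiler gives the same result, and \autoref{thm:lang-mono} lifts the obligation from a prefix of the target up to all of $L_t$. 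With the lemma in hand, each rule-use case is discharged by instantiating the extracted obligation at the compiled metavariable substitution $\cmp\gamma$: using $\cmp{\gamma(e)}=\cmp\gamma(\cmp{e})$ and closure of the target judgments under substitution, the obligation stated for the rule's canonical context transports to the specific instance appearing in the derivation.

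I expect the main obstacle to be precisely this threading of metavariable substitutions through the rule-use cases. The extracted obligation speaks about a rule's generic context, whereas the source derivation applies the rule under an arbitrary substitution, and closing this gap cleanly is what forces the syntactic homomorphism invariant to be available and to commute with both compilation and the target's substitution operation. A secondary subtlety is the conversion rule, where a term is retyped along a sort equation: handling it requires the sort-equation preservation clause that the footnote to \autoref{def:sem-pres} notes is present in the artifact but elided here, after which it follows by the same pattern as the term-equation case.
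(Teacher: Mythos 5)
Your proposal matches the paper's proof in essentially every respect: a mutual induction over all of Pyrosome's judgment forms in which the generic structural cases (reflexivity, symmetry, transitivity, congruence via the syntactic substitution-homomorphism property) are dispatched once and for all, and each rule-use case is discharged by extracting the corresponding obligation from $\pres{L_t}{cmp}{L_s}$ via a lockstep inversion lemma, then lifting it with compiler weakening and with \autoref{thm:lang-mono}. The only cosmetic divergence is where monotonicity is invoked---per the Coq definition in \autoref{fig:pres} the target model is fixed, so the recorded obligations already live in full $L_t$ and the paper uses \autoref{thm:lang-mono} to lift judgments over prefixes of the \emph{source} language and compiler, whereas you describe lifting target prefixes---but this does not change the substance of the argument.
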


For $\presNoArgs$ to be useful, it must give us semantics preservation.
We bridge this gap in \autoref{thm:ind-impl-sem},
which states that the predicate described above implies the universally quantified semantic properties.
The proof of this theorem is by mutual induction over all of the judgment forms in Pyrosome,
where for each judgment, we must show that it is preserved by compilation.
The term-equivalence case resembles the earlier proof we sketched for the CPS pass starting from STLC,
except that we generalize the $\beta$-reduction case to a lemma about the preservation
of each equivalence written in the language description.
This lemma relies on 2 related principles: weakening and monotonicity.
We disallow compilers from overwriting old cases, so we can safely use weakening lemmas
to extend the compilers in the hypotheses provided by $\presNoArgs$
so that they refer to the whole compiler.
To finish lifting each obligation to cover the whole compiler and source language,
we appeal to \autoref{thm:lang-mono}, which states that all judgments are monotonic under language extension.

Thanks to this theorem, we prove semantic preservation for each translation in Pyrosome
by way of $\presNoArgs$.
Our mechanization automatically breaks down the necessary proof obligations,
and the resulting goals are quite amenable to both automation and human reasoning
since they feature no quantification at the Coq level
and can be proven by direct construction of either well-formedness or equivalence derivations in the target language.
In fact, we are now equipped to prove equivalence preservation for our base CPS pass,
which we will do in \autoref{thm:cps-subst-pres} and \autoref{thm:cps-pres},
which state respectively that the translation for the base substitution calculus
and the STLC extension satisfy $\presNoArgs$.

\begin{theorem}[CPS$_{\mathit{subst}}$ is $\presNoArgs$]
  \label{thm:cps-subst-pres}
  Let $\mathit{subst}$ be the source substitution calculus,
  let $\mathit{cps}_{\mathit{subst}}$ be the portion of the CPS pass with $\mathit{subst}$ as its domain,
  and let $\mathit{cont}$ be the continuation calculus.
  Then we have $\presExt{[]}{\mathit{cont}}{cps_{\mathit{subst}}}{\mathit{subst}}$.
\end{theorem}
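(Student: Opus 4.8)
The plan is to build the $\presNoArgs$ derivation directly by walking through the rules of $\mathit{subst}$ in order, applying the appropriate constructor of the inductive predicate (from \autoref{fig:pres}) for each rule and bottoming out at the nil case, where both the language and the compiler are empty. Here the prefix $cmp_{pre}$ is empty because $\mathit{subst}$ is the base language with no predecessor compiler. Since $\mathit{subst}$ is a concrete finite list of rules, this yields exactly one local proof obligation per rule, each stated entirely in terms of the continuation calculus $\mathit{cont}$ and a prefix of $cps_{\mathit{subst}}$. Crucially, as noted in the prose preceding \autoref{fig:pres}, each obligation references only the compiler mappings for earlier rules, so the obligations can be discharged independently in source order.

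Most of these obligations are structural and follow immediately, because $\mathit{cont}$ is built as an extension of the very value-substitution fragment that $\mathit{subst}$ provides. The sort formers $\sctx$, $\sty$, $\sval{\Gamma}{A}$, and $\ssub{\Gamma}{\Delta}$ map to their counterparts, and the value operations (value substitution, substitution extension $\langle\gamma,v\rangle$, the head variable $\underline{0}$, and context extension) map to identical operations in $\mathit{cont}$, so both their well-formedness and their equational obligations are inherited verbatim from the shared fragment; for instance, the compiled head-projection equation $\underline{0}[\langle\gamma,v\rangle]_v = v$ is discharged by $\mathit{cont}$'s own copy of the same law. The two genuinely new features are the expression sort, which compiles to the computation sort $\scmp{\cmp\Gamma, k : \lnot\cmp A}$ carrying an added continuation binding, and the return former, with $\cmp{\eret v} = k~\cmp v$. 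For the sort I must check that $\cmp\Gamma, k : \lnot\cmp A$ is a well-formed context via $\mathit{cont}$'s $\lnot$-formation rule; for the term, well-formedness of $k~\cmp v$ follows directly from $\mathit{cont}$'s continuation-application rule, since in the compiled context $k : \lnot\cmp A$ and $\cmp v : \cmp A$.

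The remaining obligations are the equations that mention expression substitution, and this is where the work concentrates. Because the expression sort gains the continuation variable $k$, the compiler must send an expression substitution $e[\gamma]$ to a $\mathit{cont}$ substitution that extends $\cmp\gamma$ to fix $k$ in place, namely by weakening $\cmp\gamma$ and reattaching the head variable $\underline{0}$. Verifying the compiled image of an equation such as $(\eret v)[\gamma] = \eret v[\gamma]_v$ then amounts to pushing this $k$-extended substitution through the continuation call using $\mathit{cont}$'s substitution laws: the application-substitution law distributes it over $k~\cmp v$, the projection law sends $k$ to itself, and $\cmp v$ is left unaffected because it does not mention $k$, so both sides reduce to $k~(\cmp v[\cmp\gamma]_v)$. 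The same style of reasoning discharges the compiled weakening and substitution-composition equations. I expect this continuation-threading to be the main obstacle: the bookkeeping of the $k$-extended substitutions and their interaction with the generic, partly autogenerated substitution laws of \autoref{sec:STLC:sub} is where the equational derivations become delicate, whereas every obligation that stays within the shared value-substitution fragment is routine.
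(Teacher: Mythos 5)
Your proposal matches the paper's proof, which simply states that the result follows ``by the definition of $\presNoArgs$ and construction of $\mathit{cont}$ derivations in all cases'' --- exactly the rule-by-rule application of the inductive predicate's constructors, discharging each obligation with a concrete target-language derivation, that you describe. Your additional detail (the shared value-substitution fragment being inherited verbatim, and the $k$-threading for the expression-substitution equations being the delicate part) is a faithful elaboration of what those derivations involve, not a different route.
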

\begin{proof}
  By the definition of $\presNoArgs$ and construction of $\mathit{cont}$ derivations in all cases.
\end{proof}

\begin{theorem}[CPS is $\presNoArgs$]
  \label{thm:cps-pres}
  Let $cps_{\mathit{subst}}$ be the portion of the CPS pass with $\mathit{subst}$ as its domain,
  let $\mathit{cps}$ be the portion with the STLC extension as its domain,
  and let $\mathit{cont}$ be the continuation calculus.
  Then we have $\presExt{\mathit{cps}_{\mathit{subst}}}{\mathit{cont}}{\mathit{cps}}{\mathit{STLC}}$.
\end{theorem}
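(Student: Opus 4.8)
The plan is to apply the definition of $\presNoArgs$ from \autoref{fig:pres} directly. Since $\presExt{cps_{\mathit{subst}}}{\mathit{cont}}{\mathit{cps}}{\mathit{STLC}}$ is exactly the inductive predicate built up over the rules of the STLC extension, and since $cps_{\mathit{subst}}$ already discharges the substitution-calculus prefix by \autoref{thm:cps-subst-pres}, it suffices to check the obligation attached to each rule that STLC adds. Because every rule references only earlier constructs and the compiled terms carry free metavariables, each obligation is a single quantifier-free derivation in the continuation calculus $\mathit{cont}$. The STLC extension contributes three constructor rules — the function type $A \to B$, application $e~e'$, and the lambda value $\elam x A e$ — and three equations — the $\beta$ rule and the two autogenerated substitution equations. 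I would dispatch the constructors as type-preservation goals and the equations as target-equivalence goals.

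For the constructor cases I would build the target well-formedness derivations explicitly. For $A \to B$ the obligation $\vdash \lnot(\cmp A \times \lnot \cmp B)$ follows from $\vdash \cmp A$ and $\vdash \cmp B$ using the negation and product formers of $\mathit{cont}$. For the lambda, the hypothesis gives $\cmp\Gamma, x : \cmp A, k : \lnot \cmp B \vdash \cmp e$, so $\elam p {\cmp A \times \lnot \cmp B}{\pmPair p x k \cmp e}$ checks as a value of type $\lnot(\cmp A \times \lnot \cmp B) = \cmp{A \to B}$: the abstraction binds the pair $p$, the destructuring let recovers $x : \cmp A$ and $k : \lnot \cmp B$, and $\cmp e$ is then well-formed. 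Application is similar but requires unfolding the two nested binds and checking that the intermediate continuation types line up so that the body $x~\langle y, k\rangle$ is a well-formed computation at $\cmp\Gamma, k : \lnot \cmp B$. These are all routine constructions.

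The crux is the $\beta$ case, where I must establish $\cmp\Gamma, k : \lnot \cmp B \vdash \cmp{(\eret{\elam x A e})~(\eret v)} = \cmp{e[v/x]}$ in $\mathit{cont}$. I would evaluate the left-hand side by unfolding the application macro into its two binds and collapsing each: because both arguments have the form $\cmp{\eret(\cdot)}$, the law $(\bindCPS x {\cmp{\eret w}}{e'}) = e'[\cmp w/x]$ — itself one $\lnot$-$\beta$ step in $\mathit{cont}$ — rewrites the nested binds to $\cmp{\elam x A e}~\langle \cmp v, k\rangle$. A further $\lnot$-$\beta$ step applies the compiled lambda to the pair, and the positive-product projection $\pmPair{(v_1,v_2)} x y e = e[v_1/x, v_2/y]$ destructs $\langle \cmp v, k\rangle$, leaving $\cmp e[\cmp v/x]$. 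For the right-hand side, the compiler maps the source's explicit-substitution construct compositionally — those cases live in $cps_{\mathit{subst}}$ — so $\cmp{e[v/x]}$ is likewise $\cmp e[\cmp v/x]$. The two sides therefore meet at the same target substitution instance, up to the substitution-calculus equations already available in $\mathit{cont}$, closing the case.

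The two autogenerated substitution equations I would discharge in the same spirit: compiling $(e~e')[\gamma]$ and $e[\gamma]~e'[\gamma]$ both yield bind-encoded computations, and pushing $\cmp\gamma$ through the binds and the pairing is governed by the explicit-substitution equations of the continuation and product calculi, again supplied by $cps_{\mathit{subst}}$; the lambda equation is analogous, with the substitution commuting past the destructuring abstraction. I expect the $\beta$ case to be the main obstacle, since it is the only one that simultaneously exercises the full shape of the translation — two nested binds plus the destructuring lambda — and chains several distinct target equations. The delicate bookkeeping concerns the continuation variable $k$, which plays three roles at once: the ambient continuation of the judgment, the name reintroduced by the lambda's destructuring let, and the variable eliminated inside each bind. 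The derivation must be arranged so that identifying these occurrences of $k$ introduces no variable capture.
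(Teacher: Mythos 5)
Your proposal is correct and takes essentially the same route as the paper: the paper's proof likewise proceeds by the definition of $\presNoArgs$, constructing continuation-calculus derivations for each STLC rule, and its worked $\beta$-reduction case follows exactly your chain of rewrites --- collapsing the two binds via $\lnot$-$\beta$ steps, applying the compiled lambda to the pair $\langle \cmp v, k\rangle$, eliminating it with the product pattern-match equation, and finishing with the substitution bookkeeping that identifies $\cmp{e}[\cmp v/x, k/k]$ with $\cmp{e[v/x]}$ via the $cps_{\mathit{subst}}$ cases. The only difference is presentational: the paper notes that all obligations, including the constructor and autogenerated substitution-equation cases you sketch by hand, are discharged automatically in the mechanization.
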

\begin{proof}
  By the definition of $\presNoArgs$ and construction of $\mathit{cont}$ derivations in all cases.
   We show the $\beta$-reduction case here as an example (although all cases are fully automated in the mechanization).
  The proof obligation that $\presNoArgs$ generates corresponding to this rule
  requires us to show $\cmp{(\eret \elam x A e)~ \eret v} = \cmp{e[v/x]}$
  in the continuation calculus' equivalence relation.
  By evaluating the compiler and rewriting the term via target-language rules,
  we do so as follows\footnote{We skip the details of stepping through the interaction of the bind macro with the compilation of a return since it involves verbose juggling of continuation substitutions and little else.}:
  \[\begin{array}{rl}
  &\cmp{(\eret \elam x A e)~ (\eret v)}\\
  =& \bindCPS x {\cmp{\eret \elam x A e}} \bindCPS y {\cmp{\eret v}} x~ \langle y, k \rangle\\
  =& \bindCPS x {\cmp{\eret \elam x A e}} (\elam y {\cmp A} {x~ \langle y, k \rangle})~\cmp v \\
  =& \bindCPS x {\cmp{\eret \elam x A e}} {x~ \langle \cmp v, k \rangle} \\
  =& (\elam x {\cmp{A \to B}} {x~ \langle \cmp v, k \rangle})~\cmp{\elam x A e}\\
  =& \cmp{\elam x A e}~ \langle \cmp v, k \rangle\\
  =& (\elam p {\cmp A \times \lnot{\cmp B}}
  {\pmPair p x k {\cmp e}})
  ~ \langle \cmp v, k \rangle\\
  =& \pmPair {\langle \cmp v, k \rangle} x k {\cmp e} \\
  =&\cmp{e}[\cmp v/x, k/k]\\
  =&\cmp{e}[\cmp v/x]\\
  =&\cmp{e[v/x]}\\
  \end{array}\]
\end{proof}

We are almost ready to prove semantics preservation from these facts,
but first we need one more generic theorem:

\begin{theorem}[$\presNoArgs$ Concatenation]
  \label{thm:pres-concat}
  Let $L_s$ and $L_t$ be well-formed languages,
  let $L'_s$ be a well-formed extension of $L_s$,
  and let $cmp$ and $cmp'$ be compilers such that
  $\pres{L_t}{cmp}{L_s}$ and $\presExt{cmp}{L_t}{cmp'}{L'_s}$.
  Then $\pres{L_t}{cmp'+cmp}{L'_s+L_s}$.
\end{theorem}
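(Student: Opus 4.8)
The plan is to prove \autoref{thm:pres-concat} by a single structural induction on the derivation of the extension hypothesis $\presExt{cmp}{L_t}{cmp'}{L'_s}$, carrying the base hypothesis $\pres{L_t}{cmp}{L_s}$ as a fixed side assumption. The observation that makes this work is exactly the design of $\presNoArgs$ discussed above: every per-rule obligation in $\presExt{cmp_{pre}}{L_t}{\cdot}{\cdot}$ is checked against the concatenation of the compiler built so far with the fixed prefix $cmp_{pre}$. Consequently, \emph{absorbing} the prefix $cmp_{pre} = cmp$ into the compiler list itself --- and correspondingly absorbing $L_s$ into the language --- leaves each obligation unchanged, since the target-language derivation it demands depends only on the full combined compiler, which is the same list in both framings.

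Concretely, I would take the induction motive to be $P(cmp', L'_s) \triangleq \presExt{[]}{L_t}{cmp'+cmp}{L'_s+L_s}$. In the base case, $cmp'$ and $L'_s$ are empty, so $cmp'+cmp = cmp$ and $L'_s+L_s = L_s$, and the goal is precisely $\pres{L_t}{cmp}{L_s}$, discharged by the side assumption. In each inductive case --- a sort former, a term former, or an equation at the head of $L'_s$ --- I would peel off the top rule, observe that prepending it commutes with the concatenations (so $((n,\cdots) :: cmp'_t) + cmp = (n,\cdots) :: (cmp'_t + cmp)$ and likewise for the language), and then reapply the matching constructor of $\presNoArgs$, now instantiated with the empty prefix. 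The recursive premise is supplied by the induction hypothesis $P(cmp'_t, L'_{s,t})$, and the head obligation is supplied by the obligation already present in the extension derivation.

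The only genuine bookkeeping is to reconcile the two forms of each head obligation. In the extension derivation the obligation is stated against the compiler $cmp'_t + cmp$, whereas the freshly applied constructor (with empty prefix) demands it against $(cmp'_t + cmp) + []$; these are equal by right-identity of list concatenation, so the two obligations coincide after a trivial rewrite. I expect this prefix reconciliation --- together with correctly setting up the motive so that the base case lands exactly on the given hypothesis rather than forcing a re-derivation of $cmp$'s obligations over $L_s$ --- to be the only point demanding care, and it is purely administrative. Notably, the structural argument itself does not rely on language well-formedness: those hypotheses are stated for uniformity with the rest of the development (and to make the conclusion's combined language meaningful), but the induction goes through on the shape of the extension derivation alone. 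In short, the theorem is nearly immediate once $\presNoArgs$ is generalized with its $cmp_{pre}$ parameter; the real work was front-loaded into that definition.
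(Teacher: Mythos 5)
Your proposal is correct and follows essentially the same route as the paper, whose proof is exactly an induction on the derivation of $\presExt{cmp}{L_t}{cmp'}{L'_s}$. Your elaboration of the motive, the base case landing on $\pres{L_t}{cmp}{L_s}$, and the prefix-absorption observation (obligations depend only on the combined list $cmp' + cmp$) is precisely the reasoning the paper's one-line proof compresses.
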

\begin{proof}
By induction on the proof of $\presExt{L_s}{L_t}{cmp'}{L'_s}$.
\end{proof}

Now we can prove that our CPS pass is semantics-preserving in \autoref{thm:cps-sem-pres}.
We will generally not go through these steps for future case-study components,
since composing the $\presNoArgs$ theorems is entirely mechanical
and indeed performed by a tactic in the mechanization.

\begin{theorem}[CPS is Semantics-Preserving]
  \label{thm:cps-sem-pres}
  Let $\mathit{subst}$ be the source substitution calculus,
  let $\mathit{cps}_{\mathit{subst}}$ be the portion of the CPS pass with $\mathit{subst}$ as its domain,
  let $\mathit{cps}$ be the portion with the STLC extension as its domain,
  and let $\mathit{cont}$ be the continuation calculus.
  Then we have $\mathit{cps} + \mathit{cps}_{\mathit{subst}}$ is a semantics-preserving compiler from
  STLC + $\mathit{subst}$ to $\mathit{cont}$.
\end{theorem}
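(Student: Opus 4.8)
The plan is to chain the two $\presNoArgs$ facts we have just established through the concatenation theorem and then invoke Theorem~\ref{thm:ind-impl-sem} to pass from $\presNoArgs$ to semantics preservation. All of the genuine mathematical content — in particular the equational reasoning about the CPS translation — has already been isolated in Theorems~\ref{thm:cps-subst-pres} and~\ref{thm:cps-pres}, so this final step amounts to checking that their conclusions line up as the hypotheses of the composition machinery.

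First I would observe that $\presExt{[]}{\mathit{cont}}{\mathit{cps}_{\mathit{subst}}}{\mathit{subst}}$ from Theorem~\ref{thm:cps-subst-pres} is, by definition, the same as $\pres{\mathit{cont}}{\mathit{cps}_{\mathit{subst}}}{\mathit{subst}}$, since $\presNoArgs$ is exactly the generalized predicate specialized to an empty base compiler. Together with Theorem~\ref{thm:cps-pres}, which supplies $\presExt{\mathit{cps}_{\mathit{subst}}}{\mathit{cont}}{\mathit{cps}}{\mathit{STLC}}$, these are precisely the two premises required by the $\presNoArgs$ Concatenation theorem (Theorem~\ref{thm:pres-concat}), instantiated with $L_s = \mathit{subst}$, $L_t = \mathit{cont}$, $L'_s = \mathit{STLC}$, $cmp = \mathit{cps}_{\mathit{subst}}$, and $cmp' = \mathit{cps}$. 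Applying it yields $\pres{\mathit{cont}}{\mathit{cps} + \mathit{cps}_{\mathit{subst}}}{\mathit{STLC} + \mathit{subst}}$. With that predicate in hand, I would finish by applying Theorem~\ref{thm:ind-impl-sem} with $L_s = \mathit{STLC} + \mathit{subst}$ and $L_t = \mathit{cont}$, which directly delivers the claim that $\mathit{cps} + \mathit{cps}_{\mathit{subst}}$ is a semantics-preserving compiler from STLC + $\mathit{subst}$ to $\mathit{cont}$.

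The only obligations beyond definition-chasing are the well-formedness side conditions demanded by Theorems~\ref{thm:pres-concat} and~\ref{thm:ind-impl-sem}: namely that $\mathit{subst}$, $\mathit{cont}$, and the extension $\mathit{STLC}$ — and hence the composite $\mathit{STLC} + \mathit{subst}$ — are all well-formed languages. I expect these to be the main, though still routine, obstacle. They are discharged once and for all by checking each rule against the constructors and equations of its predecessors, exactly the dependency-ordering condition emphasized after Theorem~\ref{thm:lang-mono}; no new reasoning about the translation itself is involved. Indeed, because every substantive step is a direct appeal to an already-proven lemma, I would expect the whole argument to be fully mechanical, as the paper notes is the case in the mechanization.
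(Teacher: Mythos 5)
Your proof is correct and takes essentially the same route as the paper's: both reduce the goal to $\pres{\mathit{cont}}{\mathit{cps}+\mathit{cps}_{\mathit{subst}}}{\mathit{STLC}+\mathit{subst}}$ via \autoref{thm:ind-impl-sem} and then discharge that predicate by applying \autoref{thm:pres-concat} to \autoref{thm:cps-subst-pres} and \autoref{thm:cps-pres}. Your explicit attention to the well-formedness side conditions is a correct bit of extra care that the paper leaves implicit.
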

\begin{proof}
  By \autoref{thm:ind-impl-sem} it suffices to show
  $\pres{\mathit{cont}}{\mathit{cps} + \mathit{cps}_{\mathit{subst}}}{\mathit{STLC} + \mathit{subst}}$.
  We apply \autoref{thm:pres-concat}
  and conclude with \autoref{thm:cps-subst-pres} and \autoref{thm:cps-pres}.
\end{proof}

\subsection{Closure Conversion and Vertical Composition}

\begin{figure}
  \begin{mathpar}
    \inferrule{\vdash A}{\vdash \lnot A} \and
    \inferrule{\Gamma \vdash_v v : \lnot A \and \Gamma \vdash_v v' : A}{\Gamma \vdash v~v'} \and
    \inferrule{z: A \times B \vdash e \and \Gamma \vdash_v v : A}
              {\Gamma \vdash_v \eclo z A B e v : \lnot B} \and
    \inferrule{z: A \times B \vdash e \and \Gamma \vdash_v v : A \and \Gamma \vdash_v v' : B}
              {\Gamma \vdash \eclo z A B e v ~ v'
                = e[\epair v {v'}/z]} \and
    \inferrule{z: A \vdash_v v : \lnot B}
              {z : A \vdash_v \eclo z A B {v[z.1/z]~z.2} z
                = v : \lnot B} \and
  \end{mathpar}
  
  \[
  \begin{array}{rcl}
    \cmp{\cdot} &\triangleq& \tunit\\
    \cmp{\Gamma,x:A} &\triangleq& \cmp\Gamma \times \cmp A\\
    \cmp{\lnot A} &\triangleq& \lnot \cmp A\\
    \cmp{\elam x A e} &\triangleq&
    \text{clo } \langle(x: \cmp\Gamma \times \cmp A). \cmp e, z\rangle\\
    \cmp{v~v'} &\triangleq& \cmp v ~ \cmp{v'}\\
  \end{array}
  \]
  \caption{Closure conversion (selected rules)}
  \label{fig:cc}
\end{figure}

For the second pass of our case study, we perform closure conversion,
materializing the environment as a tuple value.
In \autoref{fig:cc}, we show the closure calculus that replaces our CPS calculus.
To construct environment tuples in a compositional way,
we compile CPS environments to closure-converted types and map CPS
judgments $\Gamma \vdash e$ and $\Gamma \vdash_v v : A$
to closure-converted judgments $z:\cmp\Gamma \vdash \cmp e$ and $z:\cmp\Gamma \vdash_v \cmp v : \cmp A$,
using the variable $z$ for the environment.
In order to define closure conversion in the simply typed setting, we use a fused closure form
$\eclo z A B e v$ that captures the combined behavior of the existential, pair, and function
in the normal translation, in a style closely related to that of \citet{minamide1996typed}.
We define two equations on closures, a $\beta$ rule and an $\eta$ rule.
The $\beta$ rule evaluates the body of the closure, passing in a tuple formed of its argument and its environment.
The $\eta$ rule states that if we have a closure $v$ with a free variable $z$,
then it is equivalent to a new closure that stores $z$ as its environment and calls $v$ in its body.

Now that we have two compiler passes, we can sequence them using function composition
to get from the source to the final target.
Fortunately, we can do so in one step thanks to our choice of definitions.
We prove semantics preservation for closure conversion on its own analogously to CPS.
Then, by transitivity (\autoref{thm:sem-pres-trans}), since both passes preserve semantics, so does their composition.

\begin{theorem}[Transitivity of Semantics Preservation]
  \label{thm:sem-pres-trans}
  If $f$ is a semantics-preserving compiler from $L_1$ to $L_2$
  and $g$ is a semantics-preserving compiler from $L_2$ to $L_3$,
  then $g \circ f$ is a semantics-preserving compiler from $L_1$ to $L_3$.
\end{theorem}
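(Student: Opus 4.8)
The plan is to unfold \autoref{def:sem-pres} and discharge each of its three clauses for $g \circ f$ by chaining the corresponding clauses of the two hypotheses. Write $\cmpWith{-}{f}$, $\cmpWith{-}{g}$, and $\cmpWith{-}{g \circ f}$ for compilation under $f$, $g$, and their composite respectively. The one structural fact I would establish first is that composite compilation factors as ordinary function composition on every syntactic category, i.e.\ $\cmpWith{x}{g \circ f} = \cmpWith{\cmpWith{x}{f}}{g}$ uniformly for $x$ a context, sort, or term. In our framework this is essentially definitional, since a composite compiler is just the functional composition of the two folds; nonetheless I would record it explicitly (by a short induction on the syntax of $x$, using that context compilation is pointwise) so that it may be applied freely in the three cases below.

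With that in hand, each clause is a two-step chain. For the sort clause, suppose $C \vdash_{L_1} s~\mathit{sort}$. Applying the sort clause of $f$'s preservation yields $\cmpWith{C}{f} \vdash_{L_2} \cmpWith{s}{f}~\mathit{sort}$, a derivation in $g$'s source language $L_2$. Feeding this into the sort clause of $g$'s preservation yields $\cmpWith{\cmpWith{C}{f}}{g} \vdash_{L_3} \cmpWith{\cmpWith{s}{f}}{g}~\mathit{sort}$, which by the factoring fact is exactly $\cmpWith{C}{g \circ f} \vdash_{L_3} \cmpWith{s}{g \circ f}~\mathit{sort}$, as required. The term clause and the term-equivalence clause proceed identically: apply $f$ to land in $L_2$, apply $g$ to land in $L_3$, and rewrite the doubly-compiled objects via the factoring fact. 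No well-formedness side conditions on $C$ intervene, since \autoref{def:sem-pres} states each property as a bare implication, so every output of $f$ is directly a valid input to $g$.

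There is no substantial obstacle here: the theorem amounts to the observation that the three preservation properties are closed under relational composition, and the proof is a direct syntactic chaining. The only point demanding care is the factoring lemma $\cmpWith{x}{g \circ f} = \cmpWith{\cmpWith{x}{f}}{g}$, which I would treat as the crux of the argument; in a deeply embedded setting one must \emph{confirm} that the composite compiler's action on contexts, sorts, and terms genuinely coincides with performing the two folds in sequence rather than assume it. Once that equation is discharged, the remainder is immediate and is precisely the kind of bookkeeping the mechanization handles automatically.
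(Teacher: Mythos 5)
Your proposal is correct and takes essentially the same route as the paper: both unfold \autoref{def:sem-pres} and discharge each clause by chaining, applying $f$'s preservation to land in $L_2$ and then $g$'s to land in $L_3$ (the paper writes out only the term-equivalence clause and declares the others analogous). The factoring fact you single out as the crux is definitional under the paper's reading, since $g \circ f$ in the theorem statement denotes literal function composition of the two compilation maps, so $\cmpWith{x}{g \circ f} = \cmpWith{\cmpWith{x}{f}}{g}$ holds by construction rather than by a separate induction.
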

\begin{proof} We show that $g \circ f$ maps equivalent $L_1$ terms to equivalent $L_3$ terms.
  The other conjuncts of semantics preservation proceed analogously.
  Consider $C \vdash t_1 = t_2 : s$ in $L_1$.
  By semantics preservation of $f$, $f(C) \vdash f(t_1) = f(t_2) : f(s)$ in $L_2$.
  Then, by semantics preservation of $g$, $g(f(C)) \vdash g(f(t_1)) = g(f(t_2)) : g(f(s))$ in $L_3$,
  so we conclude.
\end{proof}

\subsection{Recursive-Functions Compiler Extension}
We extended STLC with rules for recursive functions in \autoref{fig:rec}.
To perform CPS translation on recursive functions, we add an analogous
construct for recursive continuations to the continuation calculus
and translate from the first set of rules to the second:
\begin{mathpar}
  \inferrule{\Gamma,f : \lnot A,x:A \vdash e}{\Gamma \vdash_v \efix f x A e : \lnot A} \and
  \inferrule{\Gamma,f : \lnot A,x:A \vdash e  \and \Gamma \vdash_v v : A}
            {\Gamma \vdash (\efix f x A e)~v = e[(\efix f x A e)/f,v/x]}\and
  \cmp {\efix f x A e} \triangleq \efix f z {\cmp A \times \lnot \cmp B} {\pmPair z x k {\cmp e}}
\end{mathpar}

Building on the result for the core CPS transformation,
we show in \autoref{thm:cps-rec} that the compiler for the recursion extension satisfies $\presNoArgs$.
This proof requires fulfilling three new obligations as per the definition of $\presNoArgs$:
one to show that the compiler is type-preserving on recursive functions, one to show that reduction of recursive functions is preserved, and one for the substitution rule for recursive functions, which was autogenerated as discussed in \autoref{sec:STLC:sub}.

\begin{theorem}[CPS$_{rec}$ is $\presNoArgs$]
  \label{thm:cps-rec}
  Let $\mathit{rec}$ be the recursive functions extension to STLC,
  let $\mathit{cps}$ be the CPS pass for STLC,
  let $\mathit{cps}_{\mathit{rec}}$ be the CPS extension with $\mathit{rec}$ as its domain,
  and let $\mathit{cont}_{\mathit{rec}}$ be the continuation calculus extended with recursive continuations.
  Then we have $\presExt{\mathit{cps}}{\mathit{cont}_{\mathit{rec}}}{\mathit{cps}_{\mathit{rec}}}{\mathit{rec}}$.
\end{theorem}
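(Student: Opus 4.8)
The plan is to unfold the definition of $\presNoArgs$ at the extension $\mathit{rec}$ and discharge the resulting goals by directly constructing derivations in $\mathit{cont}_{\mathit{rec}}$, exactly as in the proofs of \autoref{thm:cps-subst-pres} and \autoref{thm:cps-pres}. Because $\mathit{rec}$ contributes only the fix value former, its $\beta$-reduction equation, and the autogenerated substitution equation, the predicate reduces to the three obligations noted in the prose: type preservation for $\efix f x A e$, preservation of its reduction, and preservation of its substitution. The base compiler $\mathit{cps}$ plays the role of $cmp_{pre}$, so each obligation's source context and sort are compiled with the combined compiler; and since $\mathit{cont} \subseteq \mathit{cont}_{\mathit{rec}}$, I would lift the preservation facts established for $\mathit{cps}$ over $\mathit{cont}$ to $\mathit{cont}_{\mathit{rec}}$ using monotonicity (\autoref{thm:lang-mono}).

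For type preservation, the goal is that $\cmp{\efix f x A e} = \efix f z {\cmp A \times \lnot \cmp B}{\pmPair z x k {\cmp e}}$ be a well-formed value of sort $\sval{\cmp\Gamma}{\lnot(\cmp A \times \lnot \cmp B)}$ in the compiled metavariable context. I would build the derivation bottom-up: the recursive-continuation rule demands that the body typecheck under $\cmp\Gamma, f : \lnot(\cmp A \times \lnot \cmp B), z : \cmp A \times \lnot \cmp B$, and the product pattern-match rule reduces this to checking $\cmp e$ under $\cmp\Gamma, f : \cmp{A\to B}, x : \cmp A, k : \lnot \cmp B$. Since the CPS translation sends the body's expression sort $\sexp{(\Gamma, f : A\to B, x:A)}{B}$ to the computation sort $\scmp{\cmp\Gamma, f : \cmp{A\to B}, x : \cmp A, k : \lnot \cmp B}$, this is exactly the context the compiled body metavariable already inhabits (using $\cmp{A\to B} = \lnot(\cmp A \times \lnot \cmp B)$), so the leaf closes by the variable rule once the two contexts are checked to coincide.

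The reduction obligation is the main obstacle and follows the shape of the $\beta$ computation in \autoref{thm:cps-pres}. I would first evaluate the compiler on $\cmp{(\eret \efix f x A e)~(\eret v)}$, expand the application macro into its nested binds, and collapse them with the bind monad laws to reach $\cmp{\efix f x A e}~\langle \cmp v, k\rangle$, just as in the lambda case. Unfolding the fix compilation and applying the target fix $\beta$ rule substitutes the whole recursive continuation for $f$ and $\langle \cmp v, k\rangle$ for $z$; the product $\beta$ rule then destructures the environment pair, leaving $\cmp e[\cmp{\efix f x A e}/f, \cmp v/x]$. A final rewrite to $\cmp{e[(\efix f x A e)/f, v/x]}$ via commutation of compilation with substitution closes the case, mirroring the last line of \autoref{thm:cps-pres}. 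The delicate points --- and the reason this is the hardest obligation --- are the recursive self-substitution of the fix for $f$, which has no counterpart in the STLC $\beta$ case, and the continuation-variable bookkeeping that the paper already flags as verbose juggling of continuation substitutions; making these line up is where the real work lives.

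The substitution obligation is then routine: the autogenerated equation pushes an explicit substitution through $\efix f x A e$ into its body, and I would discharge its preservation exactly as for the lambda substitution rule of the base pass, evaluating the compiler on both sides and appealing to the corresponding autogenerated substitution rule for recursive continuations in $\mathit{cont}_{\mathit{rec}}$. In the mechanization all three goals are dispatched by the same automation used for \autoref{thm:cps-pres}, so the genuinely new ingredient is the reduction computation above.
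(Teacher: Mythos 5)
Your proposal is correct and matches the paper's proof, which is exactly the same recipe: unfold $\presNoArgs$ with $\mathit{cps}$ as the base compiler, observe that $\mathit{rec}$ generates the three obligations (type preservation for fix, preservation of its $\beta$-reduction, and the autogenerated substitution rule), and discharge each by direct construction of derivations in the extended continuation calculus, with the reduction case following the same bind-collapsing computation as the STLC $\beta$ case in \autoref{thm:cps-pres}. The paper states this in one line and leaves the details to the mechanization's automation, so your expanded account is simply a faithful elaboration of the same argument.
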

\begin{proof}
  By the definition of $\presNoArgs$ and construction of $\mathit{cont}$ derivations in all cases.
\end{proof}

Similarly to \autoref{thm:cps-sem-pres}, we can show that the combined compiler $\mathit{cps}_{\mathit{rec}} + \mathit{cps}$ is semantics-preserving
by using \autoref{thm:ind-impl-sem} and \autoref{thm:pres-concat}.
However, we also need one more property.
We can only concatenate compilers that have the same target language,
but the codomain of $\mathit{cps}$ is the continuation calculus before we have added recursive continuations.
Fortunately, \autoref{thm:cmp-embed} states that we can lift any compiler targeting a language $L_t$
to one targeting a superset language $L_t'$.
Thus, we can lift $\mathit{cps}$ to target $\mathit{cont}_{\mathit{rec}}$ before appending the $\mathit{cps}_{\mathit{rec}}$ compiler extension.

\begin{theorem}[Compiler codomain embedding]
  \label{thm:cmp-embed}
  If we know $\pres{L_t}{\mathit{cmp}}{L_s}$ and $L_t \subseteq L_t'$,
  then it follows that $\pres{L_t'}{\mathit{cmp}}{L_s}$.
\end{theorem}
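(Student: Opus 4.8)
The plan is to prove the claim by induction on the derivation of $\pres{L_t}{cmp}{L_s}$, whose structure (by the definition of $\presNoArgs$) follows the list of rules in the source language $L_s$. The enabling observation is that compilation of sorts, terms, and contexts depends only on the compiler $cmp$ and is completely insensitive to the target language: compiling a source construct yields a fixed target AST whether we regard it as living in $L_t$ or in the larger $L_t'$. Hence enlarging the target from $L_t$ to $L_t'$ changes none of the compiled objects $\cmp{C}$, $\cmp{e}$, $\cmp{s}$; it only changes the ambient language in which each per-rule proof obligation must hold. Those obligations are exactly target-language judgments, so the inclusion $L_t \subseteq L_t'$ together with \autoref{thm:lang-mono} (monotonicity under language extension) transports each of them unchanged.

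Concretely, I would case-split on the final constructor of the $\presNoArgs$ derivation. The empty case holds immediately by the nil constructor. In the term-constructor case, the inductive hypothesis gives $\presNoArgs$ for the tail over target $L_t'$, while the head supplies an obligation $\cmp{C} \vdash_{L_t} \cmp{e} : \cmp{s}$; the term-typing clause of \autoref{thm:lang-mono} yields $\cmp{C} \vdash_{L_t'} \cmp{e} : \cmp{s}$, and re-applying the same constructor closes the goal. The equation case is identical except that the obligation is an equivalence $\cmp{C} \vdash_{L_t} \cmp{e_1} = \cmp{e_2} : \cmp{s}$, which the term-equivalence clause of \autoref{thm:lang-mono} lifts to $L_t'$. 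The (elided) sort cases go through in exactly the same way using the sort-formation clause. Since $\pres{L_t}{cmp}{L_s}$ is by definition $\presExt{[]}{L_t}{cmp}{L_s}$, the prefix compiler is empty and the compiler appearing in every obligation is simply $cmp$, so no extra reasoning about a nonempty prefix is needed.

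This proof is short precisely because all of its weight rests on \autoref{thm:lang-mono}; the only theorem-specific content is lining up each kind of source rule with the matching conjunct of monotonicity. Accordingly, there is no genuinely hard step. The one point worth checking with care---the closest thing to an obstacle---is the target-independence of compilation asserted above, namely that the compiled contexts, sorts, and terms appearing in the $L_t'$ obligations are \emph{syntactically the same} objects as those in the $L_t$ obligations. Once that is granted, each inductive step collapses to a single invocation of monotonicity followed by reassembly with the corresponding $\presNoArgs$ constructor.
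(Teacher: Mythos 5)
Your proof is correct and is exactly the intended argument: the paper states \autoref{thm:cmp-embed} without a written proof, and the evident justification---induction on the $\presNoArgs$ derivation, using \autoref{thm:lang-mono} to transport each per-rule well-formedness or equivalence obligation from $L_t$ to $L_t'$ and then reapplying the corresponding constructor---is precisely what you give. The point you single out for care, that the compiled contexts, sorts, and terms appearing in the obligations are syntactically unchanged when the codomain is enlarged, indeed holds because compilation is a fold over source syntax driven solely by the compiler's finite map, with no reference to the target language.
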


Closure conversion for recursive functions requires a bit more care.
The interaction between recursive functions as formulated in our first two calculi and the environment tuple generated by closure conversion is a bit complex when all of the behavior is fused into a single construct, so we separate out a fixpoint combinator in our closure-conversion language:
\begin{mathpar}
  \inferrule{\Gamma\vdash_v v :\lnot(\lnot A \times A) }{\Gamma \vdash_v \efixCC v : \lnot A} \and
  \inferrule{\Gamma\vdash_v v :\lnot(\lnot A \times A)  \and \Gamma \vdash_v v' : A}
            {\Gamma \vdash (\efixCC v)~v' = v~\epair{\efixCC v}{v'}}\and
            \cmp {\efix f x A e} \triangleq \efixCC{\eclo z {\cmp \Gamma} {\lnot(\lnot A \times A)}
                            {\cmp e [\epair{\epair{z.1}{z.2.1}}{z.2.2}/z]} z}            
\end{mathpar}
The translation features some verbose tuple rearrangement,
but in essence it splits the recursive function into the fixpoint combinator
and its argument, which is simultaneously closure-converted.
The separation of the two parts means that we can apply closure laws as normal to reason about the body
and cordon off the recursion so that the two concerns do not interfere with each other.
Just like with the prior passes and extensions, the definition is the greatest difficulty.
The three connected proof obligations are solved automatically in the mechanization other than a single $\eta$ expansion in one case.
From there, we can use our connective theorems as before to join up all relevant components.

Now that we have covered all of the ways compilers can be extended,
we include \autoref{fig:recursive-overview} as a reference for all of the different ways we have extended our compiler.
Solid arrows indicate translations written and verified by the user, and dashed arrows indicate translations
validated by the theorems of Pyrosome that we have introduced over the course of this section.

\begin{figure}
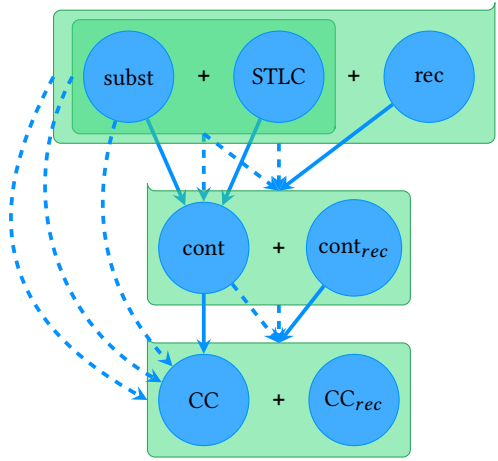

  \ctikzfig{overview}
  \vspace{-8pt}
\caption{Compiler extension}
\label{fig:recursive-overview}
\end{figure}

\subsection{How Incorrect Proofs Fail}

Pyrosome avoids the failure modes typical of trying to extend most prior verified compilers,
especially those with theorems based on contextual equivalence,
by guaranteeing unconditional monotonicity of terms', languages', and compilers' key properties under language extension.
For compiler developers used to contextual equivalence,
this principle can cause some confusion since they are understandably curious about what does happen in situations where
they would expect a new extension to break old equivalences.

Such situations fall into two categories.
The first category covers examples like $\mathit{callTwice}$ in \autoref{sec:ext:ctx-eq}
where the new language feature contradicts a property implicitly assumed by contextual equivalence.
Since semantics preservation in Pyrosome only preserves the least congruence over the explictly given rules,
we avoid this issue by not promising these implicit properties in the first place.
This behavior matches the intent of language specifications in Pyrosome as representing open sets of promises
rather than exhaustive, closed universes.

The second category covers situations where two explicit equations intuitively conflict.
Consider a call-by-value STLC with evaluation contexts that is set up to evaluate the function first in an application;
that is, it has evaluation contexts $E~e$ and $v~E$.
Now say we have two extensions: one that supports mutation and one that adds an additional evaluation context $e~E$,
indicating that evaluation may occur on either the right- or left-hand side first.
It is possible to compile the source language coherently given either one of these extensions.
However, the two of them cannot work together.
What goes wrong concretely in Pyrosome is that the common translation for the core language will validate the equations for at most one of the extensions.
If the compilation for mutation can be proven correct, there will be no reasonable way to satisfy the demands of the evaluation-context exception and vice versa.
Concretely, the obligation generated by $\presNoArgs$ for the failing one will be unprovable due to requiring that two unrelated target programs be shown equivalent.
There is a single unreasonable compiler that satisfies semantics preservation by translating all programs to unit.
However, a trivial check that, for example, the compiler distinguishes 0 and 1, or indeed any cross-language corollary like \autoref{thm:cps-cross},
is adequate to rule out this case.

\subsection{Axiomatic Equivalence and Operational Semantics}

For a language that already has a canonical operational semantics,
we can bridge the gap to equational theories and still derive correctness theorems in terms of these semantics
from Pyrosome's semantics-preservation results.
Generally, each step in the operational semantics is validated by a specific axiom in the equational theory,
which means that reduction immediately implies equivalence.
As a result, if $e \rightarrow^* v$ in the source language,
then $\cmp e = \cmp v$ in the target where $\cmp e$ is the compilation of $e$.
There are then two ways to proceed.
To conclude that the output of a compiler satisfies target-language contextual equivalence, it suffices to show that each rule of the target language's equational theory is validated by the contextual equivalence of its external definition, for example by showing that each equation holds within a logical relation or an equivalence-reflecting denotation.
For a simpler, if weaker, result, it suffices to prove that as with the source language,
each step in the target operational semantics is validated by an equivalence.
We can then use the following theorem to link the two:

\begin{theorem}
  \label{thm:op-bridge}
  Let $\cmp-$ be a semantics-preserving compiler from $S$ to $T$.
  Let $\rightarrow_S$ and $\rightarrow_T$ be subsets of the equivalence relations for $S$ and $T$ respectively.
  Let $V_s$ and $V_t$ be predicates on terms of languages $S$ and $T$
  denoting the observable values of those languages such that if two observable values are equivalent,
  then they are syntactically equal.
  Let $\sim$ relate terms of language $S$ to terms of language $T$ such that for any observable value $v_S$ of $S$,
  there exists an observable value $v_T$ of $T$ such that $\vdash_T \cmp{v_S} = v_T$ and $v_S \sim v_T$.
  
  Then, if $\vdash_S e : t$, $e \rightarrow_S^* v$, and $\cmp e \rightarrow_T^* v'$
  where $v$ and $v'$ are observable values,
  then $v \sim v'$.
\end{theorem}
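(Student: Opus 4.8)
The plan is to run a short diagram chase inside the target language's equational theory: I will convert both reduction sequences into equivalences, transport the source equivalence across the compiler using semantics preservation, and then collapse the resulting chain of target equivalences to a syntactic identity using the two hypotheses about observable values. First I would upgrade both reductions to equivalences. Since $\rightarrow_S$ is a subset of the equivalence relation for $S$, and that relation is reflexive and transitive, its reflexive-transitive closure is contained in it as well; hence $e \rightarrow_S^* v$ yields $\vdash_S e = v : t$. Symmetrically, $\cmp e \rightarrow_T^* v'$ yields $\vdash_T \cmp e = v' : \cmp t$, where I rely on type preservation under reduction (together with type preservation of the compiler) to know that $v'$ inhabits the sort $\cmp t$.

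Next I would apply semantics preservation (the equivalence conjunct of \autoref{def:sem-pres}) to $\vdash_S e = v : t$, obtaining $\vdash_T \cmp e = \cmp v : \cmp t$. Combining this with $\vdash_T \cmp e = v' : \cmp t$ by symmetry and transitivity of the target equivalence gives $\vdash_T \cmp v = v' : \cmp t$. Then I would invoke the hypothesis on $\sim$: because $v$ is an observable value of $S$, there is an observable value $v_T$ of $T$ with $\vdash_T \cmp v = v_T$ and $v \sim v_T$. One further use of transitivity produces $\vdash_T v_T = v' : \cmp t$, an equivalence between two observable values of $T$. The hypothesis that equivalent observable values are syntactically equal then forces $v_T = v'$, and substituting this identity into $v \sim v_T$ concludes $v \sim v'$.

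Since every step is either a containment of one relation in another or an application of transitivity, symmetry, or a named hypothesis, the argument has no genuinely hard step; the only real work is sort bookkeeping. Specifically, the main obstacle is ensuring that all of the target equivalences live at the common sort $\cmp t$, so that the final ``equivalent observables are syntactically equal'' hypothesis is applicable to $v_T$ and $v'$. This in turn requires justifying that $v'$ inhabits $\cmp t$, which follows from type preservation under $\rightarrow_T$ — a property that is implicit in the assumption that $\rightarrow_T$ is a subset of the sort-indexed equivalence relation of $T$.
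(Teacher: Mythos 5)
Your proof is correct and follows exactly the argument the paper intends for \autoref{thm:op-bridge} (the paper gives no separate proof, but the surrounding prose sketches precisely this chain: reductions upgrade to equivalences, semantics preservation transports the source equivalence, and transitivity plus the two value hypotheses force $v_T = v'$ syntactically, hence $v \sim v'$). The sort bookkeeping you flag is the only subtlety, and your resolution of it is the intended one.
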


While the statement of \autoref{thm:op-bridge} is a bit long due to its generality,
its assumptions boil down to requiring one fact each about source and target: that the operational steps are valid equations,
and only one additional fact about the compiler: that the chosen cross-language relation
relates every source value to a value equivalent to its compilation.
As discussed in \autoref{sec:sem-pres}, this latter property can often be validated by simply running the compiler on the appropriate inputs.

\section{Optimization}
\label{sec:optimization}
In our formulation, translation passes are designed for cross-language transformations rather than optimization.
We focus on the cross-language case since it poses the greatest challenges with respect to formulating specifications,
one of the key contributions of this work.
The difference is that optimization passes typically operate within a single language,
so they exist in a much broader design space since source and target code can interact.
Specifically, an optimization pass $o$ can choose to satisfy the specification $C \vdash t = o(t) : s$ for all terms $C \vdash t : s$,
which implies semantics preservation.

\citet{benton2004simple} showed that core optimizations like dead-code elimination and constant propagation can be expressed in an equational theory.
While he justifies such theories by proving them sound with respect to a relational semantics,
we can skip this step and consider the theories definitional, especially since intermediate languages,
where optimizations typically occur,
will disappear from the statement of end-to-end semantics preservation.
More recent work has used both e-graphs and more traditional tree traversals to develop language-generic, rewrite-based optimizations and analyses including uncurrying, dead-code elimination, and linear-algebra optimization \cite{john-li-rewrites,egg,egglog}.
Due to their rewrite-based interfaces, we expect that such tools can be adapted to our domain of equational theory-based language extensibility.

As an initial proof of concept, we implement a language-agnostic partial-evaluation pass that shares its implementation with some of our proof automation.
The partial evaluator takes as input the deeply embedded representation of its host language
and treats its equations as directed rewrite rules, internally producing and checking a proof of equivalence
between its input and output.
By filtering the input language, we can also restrict the pass to, for example, only rules that do not duplicate subterms.
Thus, it is generic over all Pyrosome languages,
can be freely inserted at any stage of a compilation pipeline,
and permits language extension.
While our partial evaluator is fairly na\"ive,
we believe that it shows Pyrosome's potential compatibility with more advanced rewrite-based optimizations.

\section{Global State and Evaluation Contexts}

\begin{figure}
  \begin{mathpar}
    \inferrule{ \Gamma : \sctx \and \vdash A}
              {\sconfig \Gamma A ~\mathit{sort}}\and
              
    \inferrule{ H : \text{heap} \and \Gamma \vdash e : A}
              {\langle H,e\rangle : \sconfig \Gamma A}\and
              
              \inferrule{\Gamma \vdash e : \tnat}{\Gamma \vdash \eget e : \tnat} \and
    
    \inferrule{\Gamma \vdash E : \tnat \rightsquigarrow A\and n : {\mathbb N}}{
      \langle H, E[\eget {(\eret (\natv n))}] \rangle = \langle H, E[H(n)]\rangle : \sconfig \Gamma A}\and
              
    \inferrule{\Gamma \vdash e : \tnat \and
      \Gamma \vdash e' : \tnat
    }{
      \Gamma \vdash \eset e {e'} : \tunit} \and
      
    \inferrule{\Gamma \vdash E : \tunit \rightsquigarrow A \and n : {\mathbb N} \and m : {\mathbb N}}{
      \langle H, E[\eset {(\eret (\natv n))} {\eret (\natv m)}] \rangle = \langle H[n \mapsto m], E[\langle \rangle]\rangle: \sconfig \Gamma A}
  \end{mathpar}
  \caption{STLC state extension}
  \label{fig:state}
\end{figure}

One of our more interesting extensions adds a global mutable heap with set and get expressions,
the rules for which we provide in \autoref{fig:state}.
These rules depend on other extensions not shown, one describing the behavior of global heaps
and another that sets up evaluation contexts.
Heaps are finite maps on natural numbers, defined with the appropriate equations.
Evaluation contexts serve as an interesting example of cross-extension interaction.
When defining pure reductions for features like functions and products,
evaluation contexts are unnecessary since Pyrosome provides congruence.
However, they are essential for the global-heap rules since they allow us
to describe performing a stateful operation inside some larger program.
From a formal perspective, they let us focus on a subterm of the computation
while defining an equation on the whole configuration so that we can access the heap.

It would be more idiomatic in a GAT to formulate such operations with algebraic effects \cite{plotkin2002notions}.
We nonetheless present mutation here using an explicit heap and evaluation contexts,
together with algebraic equations on configurations that define the interactions between programs and heaps,
to demonstrate the breadth of theories that can be encoded as GATs with reasonable naturality.

\autoref{fig:cps-eval-ctx} shows the core constructs of our evaluation-context extension
on the first line, as well as the contexts for STLC as an example.
Plug, which we write $E[e]$, and the empty context $\Ehole$ form the core extension.
We add evaluation contexts for each expression with evaluable subterms
as new pieces of syntax belonging to a sort of evaluation contexts
with judgment form $\Gamma\vdash E : A \rightsquigarrow B$
and define how plug acts on them.
We can then use equational reasoning in our proofs to decompose terms.

As an incidental benefit of our formal, deeply embedded representation of language rules,
we can write a function that generates the full typing rules and equations for evaluation contexts
from short descriptions similar to those that typically appear in a language grammar.
In fact, the four rules mentioning application contexts in \autoref{fig:cps-eval-ctx}
are generated from the following description of the evaluation contexts for STLC:
\newcommand{\Ectxdef}[1]{\{\texttt{#1}\}~}
\[
E ::= ... ~|~  E~ e ~|~ v~ E
\]
The mechanization includes a unique identifier for each evaluation context,
but we overload the syntactic form of the associated expression here, as is common.
The sorts of the subterms are determined by their names.
We denote evaluation contexts with names starting with $E$,
expression subterms with $e$, and value subterms with $v$.
From this description, we generate a well-formedness rule for each context
and the defining equations for the plug operation $\Eplug E {e}$.

\begin{figure}
  \begin{mathpar}
    \inferrule{ }{\Gamma\vdash \Ehole : A \rightsquigarrow A} \and
    \inferrule{\Gamma \vdash E : A \rightsquigarrow B \and \Gamma \vdash e : A}
              {\Gamma \vdash \Eplug E e: B} \and
    \inferrule{\Gamma \vdash e : A}
              {\Gamma \vdash \Eplug \Ehole e = e : A}
  \end{mathpar}
  \begin{mathpar}
    \inferrule{\Gamma \vdash E : A \rightsquigarrow B \to C \and \Gamma \vdash e : B}
              {\Gamma \vdash E~e: A \rightsquigarrow C} \and
    \inferrule{\Gamma \vdash_v v : B\to C \and \Gamma \vdash E : A \rightsquigarrow B}
              {\Gamma \vdash v~E: A \rightsquigarrow C} \and
              \inferrule{\Gamma \vdash E : A \rightsquigarrow B \to C
                \and \Gamma \vdash e : B
                \and \Gamma \vdash e' : A}
              {\Gamma \vdash \Eplug {(E~e)} {e'} = \Eplug E {e'}~e : C}\and
              \inferrule{\Gamma \vdash_v v : B\to C
                \and \Gamma \vdash E : A \rightsquigarrow B
                \and \Gamma \vdash e : A}
              {\Gamma \vdash \Eplug {(v~E)} e = v~\Eplug E {e} : C}
  \end{mathpar}
  \[
  \begin{array}{rcl}
    \cmp{\Ehole} &\triangleq& k~x_h\\
    \cmp{\Eplug E e} &\triangleq&
    \bindCPS x {\cmp e} {\cmp E}\\
    \cmp{E~ e} &\triangleq&
    \bindCPS x {\cmp E}\bindCPS y {\cmp e} {x ~\epair y k}\\
    \cmp{v~E} &\triangleq&
    \bindCPS y {\cmp E} {\cmp v ~\epair y k}\\
      &&\bindCPS x e e' \triangleq {e}[{\elam x {B} {e'}}/k]
      \text{ given } \Gamma,k:\lnot B \vdash e\\
  \end{array}
  \]
  \caption{Evaluation-context extension with STLC}
  \label{fig:cps-eval-ctx}
\end{figure}

Since our CPS translation uses $\bindCPS x e {e'}$ to sequence computations,
we compile away evaluation contexts in our first pass.
The bottom of \autoref{fig:cps-eval-ctx} shows the translation for the core operations
as well as for STLC's evaluation contexts.
This translation maps source-language judgments $\Gamma\vdash E : A \rightsquigarrow B$
to target judgments $\Gamma,k:\lnot B, x_h:A\vdash \cmp{E}$, translating evaluation contexts
into computations with an extra free variable named $x_h$ by convention.
Plug then turns into a bind operation, and holes behave like return operations.

With evaluation contexts established, the translation for our heap operations is minimal.
We can define the CPS pass for the set and get operations as follows:
\[
  \begin{array}{rcl}
    \cmp{\eset e {e'}} &\triangleq&
      \bindCPS x e \!\!\bindCPS y {e'} \!\!\esetCPS x y k~ \langle\rangle \\
    \cmp{\eget e} &\triangleq&
      \bindCPS x e \!\!\egetCPS x y k~y \\
  \end{array}
\]
To match the pattern of the CPS calculus, each heap operation now takes its continuation
as a subterm.
This inversion is why we no longer need to worry about evaluating effects inside of a context.
Since this extension does not interact directly with functions,
we can reuse the CPS version with our closure-converted calculus.
We still have to do a little more than write an identity compiler, however,
since closure conversion changes the expected judgment form, which interacts with the binder in
$\egetCPS x v e$:
\[
  \begin{array}{rcl}
    \cmp{\esetCPS v {v'} e} &\triangleq&
    \esetCPS {\cmp v} {\cmp {v'}} {\cmp e} \\
    \cmp{\egetCPS x v e} &\triangleq&
    \egetCPS x {\cmp v} {e[\epair z x/z]} \\
  \end{array}
  \]

\section{Polymorphism and Advanced Metaprogramming}
\label{sec:poly}

Unlike other case-study extensions, polymorphism necessitates modifying the rules of our simply typed extensions to add the type environment.
To do so, we develop a generic parameterization procedure that takes as input the language to be extended
and a specification of which forms to parameterize, adding the appropriate metavariables.
This procedure is language-agnostic and applies to more than just type environments:
for example we also use it to add type information to our base substitution calculus,
which allows the type substitutions and term substitutions to share a common set of underlying definitions.
We prove that this parameterization function is semantics-preserving on sorts, terms, contexts, and languages.
Using these facts, we further show that it preserves the $\presNoArgs$ predicate on compilers,
allowing the output compilers to participate in all of our modular connections as first-class citizens.

While the high-level idea is simple, formalizing it was complex.
The main challenge of this proof lay in justifying the features that do not need to be parameterized,
for example the sort of natural numbers,
and the position of the additional parameter in the context, when it does not appear as the first component.
Our parameterization procedure takes such language-specific information as inputs,
so the generic proof of parameterization correctness must rely on certain properties of these declarations, like the existence of a dependency ordering such that constructs marked for parameterization cannot be dependencies of ones we do not parameterize.
Fortunately, we reduce them to sound, decidable checks that can guarantee a valid parameterization without inducing additional proof obligations.
The result is \autoref{thm:parameter}.

\begin{theorem}[Parameterization Preserves $\presNoArgs$]
  \label{thm:parameter}
  Let $L_\Delta$, $L_s$, and $L_t$ be well-formed languages.
  Let $s_\Delta$ be a well-formed sort in $L_\Delta$,
  let $\Delta$ be the metavariable name for the new parameter,
  and let $spec_s$ and $spec_t$ be parameterization specifications.
  Let $c$ be a compiler satisfying $\pres{L_t}{c}{L_s}$.
  Then, if all syntactic checks hold on the above components,
  we have
  \[\presExt{id_{L_\Delta}}
  {P_L(\Delta, s_\Delta, spec_t, L_t)}
  {P_c(\Delta, spec_t, spec_s, c)}
  {P_L(\Delta, s_\Delta, spec_s, L_s)}\]
  where $id_{L_\Delta}$ is the identity compiler for $L_\Delta$,
  $P_L$ is the parameterization function for languages,
  and $P_c$ is the parameterization function for compilers.
  
\end{theorem}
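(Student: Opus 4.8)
The plan is to proceed by induction on the derivation of the hypothesis $\pres{L_t}{c}{L_s}$. Since $\presNoArgs$ is defined by recursion on the rules of its source argument, this induction walks through the rules of $L_s$ one at a time, in lockstep with the way $P_L$ parameterizes each rule of $L_s$ into a rule of $P_L(\Delta, s_\Delta, spec_s, L_s)$ and $P_c$ parameterizes each corresponding case of $c$ into a case of $P_c(\Delta, spec_t, spec_s, c)$. The base case, where the source extension is empty, is immediate from the \texttt{preserving\_compiler\_nil} constructor; note that the prefix $id_{L_\Delta}$ need not itself be shown preserving, since the generalized predicate only constrains the extension compiler. It then remains to discharge, for each kind of head rule, the obligation that $\presNoArgs$ generates for the parameterized source: a well-formed-sort obligation for each sort former, a well-formed-term obligation for each term former, and an equivalence obligation for each equation (including the autogenerated substitution equations of \autoref{sec:STLC:sub}), each stated in the parameterized target $P_L(\Delta, s_\Delta, spec_t, L_t)$ relative to the prefix compiler $id_{L_\Delta}$.

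The engine of each inductive case is a pair of transport steps. First, the inductive hypothesis---or, for the head rule, the obligation carried by the top constructor of the given derivation of $\pres{L_t}{c}{L_s}$---supplies exactly the analogous target judgment in the \emph{unparameterized} target $L_t$: well-formedness of the compiled sort or term, or equivalence of the two compiled sides. Second, I would invoke the already-established fact that parameterization is semantics-preserving on sorts, terms, contexts, and languages; viewing it as a semantics-preserving transformation from $L_t$ into $P_L(\Delta, s_\Delta, spec_t, L_t)$ (driven by $spec_t$), this lets me push each unparameterized target judgment forward to the parameterized target. The goal produced by $\presNoArgs$, however, speaks of compiling the \emph{already-parameterized} rule, so to close the gap I would prove a naturality lemma stating that parameterizing after compiling agrees with compiling after parameterizing: applying $P_c(c)$ and then compiling a parameterized sort, term, or context yields the same result as parameterizing what $c$ compiles. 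Rewriting the pushed-forward judgment by this equation makes it coincide with the goal. Routine weakening to reconcile the fixed prefix $id_{L_\Delta}$ with the full compiler is handled by \autoref{thm:lang-mono} and the codomain-embedding principle \autoref{thm:cmp-embed}, exactly as in the proof of \autoref{thm:ind-impl-sem}.

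The main obstacle is this naturality lemma and the bookkeeping it forces. Parameterization is deliberately selective: some constructs, such as the sort of natural numbers, are left untouched while others receive the new argument $\Delta$, and $\Delta$ need not occupy the head of a context. For the commutation to hold, $spec_s$ and $spec_t$ must classify constructors into parameterized and unparameterized forms consistently with the compiler $c$, and the occurrences of $\Delta$ introduced on the source side must, under compilation, align with those introduced on the target side at the positions dictated by $spec_t$. This is where the hypothesized syntactic checks earn their keep: they guarantee a dependency ordering in which unparameterized constructs never depend on parameterized ones, so the unparameterized fragment is genuinely independent of the new parameter and the naturality proof goes through by a clean structural induction over sorts and terms. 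Establishing this lemma---and, per \autoref{sec:poly}, doing so by discharging the checks as sound decidable side conditions rather than as further proof obligations---rather than any individual per-rule case, is the heart of the argument.
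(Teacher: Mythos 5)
Your proposal matches the paper's approach: the paper likewise establishes that parameterization is semantics-preserving on sorts, terms, contexts, and languages, then lifts this to $\presNoArgs$ by walking the compiler-correctness derivation rule by rule, with the real difficulty located exactly where you place it---the unparameterized constructs (e.g.\ natural numbers), the position of $\Delta$ in contexts, and the commutation of parameterization with compilation, all discharged via sound, decidable syntactic checks enforcing the dependency ordering. Your reconstruction, including the naturality lemma and the role of $id_{L_\Delta}$ as an unconstrained prefix, is a faithful and correctly detailed elaboration of that argument.
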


\begin{figure}
  \begin{mathpar}
    \inferrule{\Delta,X \vdash A}{\Delta \vdash \forall X. A} \and
    \inferrule{\Delta,X \vdash A}{\Delta \vdash \exists X. A} \and
    \inferrule{\Delta;\Gamma \vdash e : \forall X. A \and \Delta \vdash B}
              {\Delta;\Gamma \vdash e~[B] : A[B/X]} \and
    \inferrule{\Delta\vdash \Gamma \and
      \Delta,X;\Gamma \vdash e : A}{\Delta;\Gamma \vdash_v \Lambda X. e : \forall X. A} \and
    \inferrule{\Delta\vdash \Gamma \and
      \Delta,X;\Gamma \vdash e : A \and \Delta \vdash B}
              {\Delta;\Gamma \vdash (\eret \Lambda X. e)~[B] = e[B/X]}\and
    \inferrule{\Delta;\Gamma \vdash_v v : \forall X. A}
              {\Delta;\Gamma \vdash_v \Lambda X. {v~[X]} = v : \forall X. A}\and
    \inferrule{\Delta;\Gamma \vdash e : \exists X. A\and \Delta\vdash C \and \Delta,X;\Gamma,(x:A) \vdash e' : C}
              {\Delta;\Gamma \vdash \unpack{X}{x}{e}{e'} : C} \and
    \inferrule{\Delta \vdash B \and \Delta;\Gamma \vdash e : A[B/X]}
              {\Delta;\Gamma \vdash \pack{B}{e} : \exists X.A} \and
              
  \end{mathpar}
  \[
  \begin{array}{rcl}
    \cmp{\forall X.A} &\triangleq& \lnot \exists X. \lnot \cmp{A}\\
    \cmp{\Lambda X.e} &\triangleq& \lambda k. \unpack{X}{k}{k}{\cmp{e}}\\
    \cmp{e~[A]} &\triangleq& \bindCPS{x}{\cmp e}{x~ {\pack{\cmp A}{k}}}\\
    \cmp{\exists X.A} &\triangleq&\exists X. \cmp{A}\\
    \cmp{\pack{A}{e}} &\triangleq& \bindCPS{x}{\cmp e}{\pack{\cmp A}{x}}\\
    \cmp{\unpack{X}{x}{e}{e'}} &\triangleq& \bindCPS{y}{\cmp e}{\unpack{X}{x}{y}{\cmp{e'}}}\\
      &&\bindCPS x e e' \triangleq {e}[{\elam x {B} {e'}}/k]
      \text{ given } \Gamma,k:\lnot B \vdash e\\
  \end{array}
  \]
  
  \caption{Polymorphic calculus extensions (selected rules)} 
  \label{fig:system-F}
\end{figure}

Once we have applied the parameterization procedure,
instantiated with the information specific to the type-environment parameter,
to our running case study,
we can develop the type-abstraction and existential-types extensions
to validate that the parameterization process had the intended effect.
We display a portion of the rules for the source-language extension, as well as a selection of CPS-pass cases,
in \autoref{fig:system-F}.
In the target, the parameterized continuation calculus,
we choose to add only existentials, as type abstractions can be compiled to existential continuations via CPS.
Specifically, we translate the type $\forall X.A$ to $\lnot \exists X.\lnot \cmp{A}$.
The closure-conversion extension for existential types is almost an identity,
with the only deviation being some boilerplate in the elimination form to address closure conversion's effect on the environment.



\section{Linear Lambda Calculus}
\label{sec:case-study:linear}
Our primary case study builds entirely on a standard, intuitionistic substitution calculus.
To demonstrate Pyrosome's expressiveness, we also verify a CPS pass analogous to our simply typed one
from a separate, linear lambda calculus to a similarly linear continuation calculus.
The linear substitution calculus is similar to the intuitionistic one, including retaining all rules in \autoref{fig:subst},
but it restricts substitutions and variables to prohibit weakening.
We show the rules for the linear functions extension in \autoref{fig:linear}.
These rules are similar to those of STLC, but in the rules with more than one expression or value,
the context is broken up into multiple (disjoint) pieces.
This separation continues in the substitution rule for application, where the left-hand substitution must break into two parts, one for $e$ and one for $e'$.
Just like with the intuitionistic substitution calculus,
we programmatically generate the last two substitution rules for linear extensions.

The CPS pass itself is actually almost identical to the CPS pass for STLC,
with the only difference being the managing of explicit uses of the exchange rule
as the mechanization is de Bruijn indexed.
This case study currently has more manual proofs since our automation is not yet sufficient to
handle environment splitting without user input, but it still benefits from the metatheory
and principled subgoal management Pyrosome provides.

\begin{figure}
  \begin{mathpar}
    \inferrule{\vdash A \and \vdash B}{\vdash A \multimap B} \and
    \inferrule{\Gamma \vdash e : A \multimap B \and \Gamma' \vdash e' : A}{\Gamma, \Gamma' \vdash e~e' : B} \and
    \inferrule{\Gamma,x:A \vdash e : B}{\Gamma \vdash_v \elam x A e : A \multimap B} \and
    \inferrule{\Gamma,x:A \vdash e : B \and \Gamma' \vdash_v v : A}
              {\Gamma,\Gamma' \vdash (\eret \elam x A e)~(\eret v) = e[v/x] : B}\and
              \inferrule{\Gamma \vdash e : A \multimap B \and \Delta \vdash e' : A
                \and \gamma : \Gamma' \Rightarrow \Gamma
                \and \delta : \Delta' \Rightarrow \Delta}
              {\Gamma',\Delta' \vdash (e~e')[\gamma,\delta] = e[\gamma]~e'[\delta] : B} \and
    \inferrule{\Gamma,x:A \vdash e : B \and \gamma : \Gamma' \Rightarrow \Gamma}
              {\Gamma' \vdash_v (\elam x A e)[\gamma] = \elam x A e[\gamma,x/x]: A \multimap B}
  \end{mathpar}
  \caption{Linear Lambda Calculus}
  \label{fig:linear}
\end{figure}

\section{IMP}
\label{sec:case-study:imp}
To show that our framework is not restricted to functional languages,
we also compile an imperative calculus,
the grammar of which is shown in \autoref{fig:imp-bnf},
similar to the one in Chaper 2 of \citet{Winskel}
to the same target as our functional code.
We reuse the heap definition from our functional heap extension to model memory,
since memory locations without pointer arithmetic are equivalent to global variable names.
Imperative statements translate to target-language computations that expect a continuation with unit argument,
and expressions translate to computations where the continuation must be passed a natural number.
To support this compiler, we add a conditional to the target language that branches based on whether the input value is 0.
Since the target-language form only accepts values (including variables) in the condition,
the compilation for source-level conditionals first evaluates the condition
and then branches on the result.
The compilation for while loops works similarly, with the added complication of a recursion construct
so that it can jump back to the condition at the end of an iteration.

Note that since this compiler shares a common target with our main case study,
we obtain some guarantees about the interoperation of imperative and functional code.
Specifically, they can be linked, directly or with target-level glue code,
so long as the target types line up,
and we can prove that linking a compiled imperative program $\cmp S$ with a compiled functional program $\cmp f$ by either method
produces an equivalent program to linking some $\cmp {S'}$ and $\cmp{f'}$ the same way
if $S = S'$ and $f = f'$ in their respective source languages.

\begin{figure}
\[
\begin{array}{rcl}
    a,a' &::=& n ~|~ x ~|~ a + a' ~|~ a - a'\\
    s,s' &::=& \Iskip ~|~ \Iassign x a ~|~ \Iseq{s}{s'} ~|~ \ite a s {s'} ~|~ \Iwhile a s \\
    n,x &\in& \mathbb{N}\\
    \\
\end{array}
\]
  \[
  \begin{array}{rcl}
    \cmp{\Iskip} &\triangleq& k~\langle\rangle\\
    \cmp{\Iassign x a} &\triangleq&
    \bindCPS y {\cmp a}{\esetCPS {\cmp x} y {k~\langle\rangle}}\\
    \cmp{s;s'} &\triangleq&
    \bindCPS \_ {\cmp s} {\cmp {s'}}\\
    \cmp{\ite a s {s'}} &\triangleq&
    \bindCPS y {\cmp a}{\ite y {\cmp s} {\cmp {s'}}}\\
    \cmp{\Iwhile b s} &\triangleq&
           {\efixCC {\text{clo }\left\langle
               \begin{array}{l}
                 z : {\neg \tunit} \times {({\neg \tunit} \times \tunit)}.\\
                 \bindCPS x {\cmp b}
                   \\
              {\ite x
                {{z.1}~ {\langle\rangle}}
                
                {{\cmp s}[z.2.1/k]}

            }
             \end{array}
            ,
              k\right\rangle}}
           ~{\langle\rangle}

  \end{array}
  \]
  \caption{IMP (excerpt)}
  \label{fig:imp-bnf}
\end{figure}

\section{Inference and Automation}
\label{sec:misc:infer}
We rely on automation for type inference and to do most of the heavy lifting in our proofs.
Formally, GATs require the user to specify values for every metavariable when using a term former.
For example, an application node must include the input and output types, as well as the context,
as additional arguments.
To make writing GATs in our mechanization practical, we use the term below the line to determine which arguments must be written and which should be inferred.
To return to application, since we write it $e~e'$, the arguments $e$ and $e'$ are explicit
while $\Gamma$, $A$, and $B$ are implicit.

To support inferring implicit arguments, we define a separate elaboration judgment in parallel with each well-formedness judgment of Pyrosome, including term well-formedness, language well-formedness, and $\presNoArgs$.
These versions take in an extra parameter, the pre-elaboration syntax,
and ensure that the term, language, or compiler under scrutiny lines up with it.
We then derive the full term, language, or compiler
with tactics that generate correct-by-construction elaborations.
The tactics we developed are fairly general and bear no ties to any of the languages in our case study, including to the substitution calculus, as shown by the fact that the same inference tactics also work on the linear and imperative examples.

\begin{wrapfigure}{R}{0.55\textwidth}
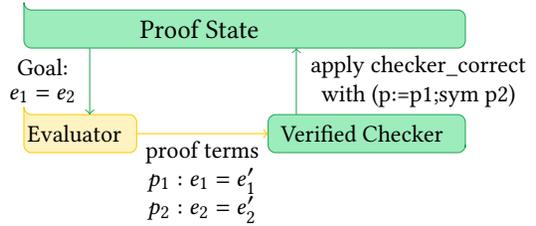

  \vspace{-0.1in}
\ctikzfig{automation}
\vspace{-12pt}
\caption{Architecture of equation-solving tactic}
\label{fig:automation-diagram}
\vspace{-0.1in}
\end{wrapfigure}

We largely automate proofs of equivalence preservation using a tactic that normalizes both sides of the goal,
the structure of which is shown in \autoref{fig:automation-diagram}.
To do so, we rely on the convention that the equations specified in each language can be read left-to-right as reduction rules.
This convention is purely a heuristic convenience, but on the examples we have worked through it has been highly successful.  
Since both our terms and language definitions are represented as deeply embedded syntax,
we define a Gallina function, the unverified evaluator in \autoref{fig:automation-diagram},
that takes as input a term and the language it belongs to
and rewrites it from left to right along each applicable rule.
To be useful in our proofs, this evaluator generates a reified proof term representing the equivalence between its input and output.
We additionally define and verify a proof-checking function that computes whether such a proof is valid.
Our tactic then runs the evaluator on the terms in the goal and inserts a call to the checker with the result in a proof-by-reflection style.
Structuring the tactic this way means that we do not have to verify the evaluator code, which is notably more complex than the proof checker. As an added benefit, since the evaluator does not appear in the proof term, only the checking function must be rerun at the end of the proof, which improves performance.
This approach turned out to be effective, solving almost all of the equivalence-preservation goals in our case study.
The obligations that required manual guiding generally fit at least one of two categories: either they required the use of an $\eta$-law, or they interacted with the heap.
Our tactics featured reduced efficacy on heap rules related to the encoding of pointer comparisons in our rule definitions.

\section{Related Work}
\label{sec:related}

\subsection{Alternative Generic Frameworks}
\label{sec:related:alt}
The principle benefit of GATs \cite{cartmell}
over most generic frameworks for programming-language metatheory \cite{felleisen1991expressive,MTC,3MT}
is its presentation via equational theories rather than operational semantics.
\citet{felleisen1991expressive} gives meaning to its programs
by an arbitrary termination predicate and a contextual-equivalence relation
defined in terms of that predicate.
As discussed in \autoref{sec:ext:ctx-eq}, contextual equivalence is too strong for extensible reasoning.
We, by design, choose not to include every reasonable equation in a given language,
since equations that may be reasonable to include, say, in a pure language
interact poorly with effectful extensions.
By using the smallest equational theories wherever possible,
that is only including the equations at each level that application and compiler reasoning demand,
users of Pyrosome increase the compatibility and extensibility of their developments.

Prior work on modular metatheory and language extensibility
mechanized a system in Coq that covered components like language interpreters,
including modular soundness properties \cite{MTC,3MT}, in the style of \citet{datatypes-a-la-carte}.
\citet{Allais} also built a framework for reasoning about binding, renaming, and translation
that bears some similarities to this work.
There exist many other formal frameworks in the same general spirit, including a long line of work
on logical frameworks \cite{harper1993framework}, that are quite useful in broad domains,
but assume a particular structure for variable binding.
However, these projects incorporate object-language binding and substitution into their frameworks,
rather than implementing them as a first-class feature.
While this is attractive for ease of use since it lifts substitution into the metatheory,
it restricts the generality of the framework in ways that we wish to avoid in Pyrosome.

The K Framework \cite{rosu2010overview} has built an extensive ecosystem
of generic language-specification tooling,
and their logic is powerful enough to express a variety of binders internally \cite{k-binders}.
The project has also recently expanded to cover certified proofs of the behavior of individual programs \cite{k-proof}.
However, they do not address the higher-order concerns of verified compilation
and compiler extensibility that we cover in this work.

The recent GATlab project \cite{Lynch_2024} is based on a very similar system of GATs and GAT morphisms,
implemented as an embedded domain-specific language in Juila for the purpose of algebraic modelling in scientific and engineering applications.
GATLab offers a number of methods of interaction designed to facilitate exploration in applied category theory, which would likely be useful additions to Pyrosome for the purposes of GAT development.
However, as it is not a theorem prover, it does not attempt to prove facts, internally or metatheoretically, as we do.

\subsection{Multilanguage Semantics}
\label{sec:multilang}

Existing literature documents some of the uses of multilanguage semantics,
including in compiler verification \cite{multilang,perconti2014verifying,ML-verif-comp}.
However, rather than use a formal framework to describe the way these works combine languages,
the authors design ad-hoc multilanguages for their specific use cases.
As a result, they cannot exploit the generic properties of language extension
that we take advantage of.
Furthermore, since they use contextual equivalence in their multilanguages
in their specifications of compiler correctness \cite{perconti2014verifying},
they cannot extend their compilers with new supported features or additional passes.
Related work takes steps towards generalizing the key elements of this line of research
so as to better support modular reasoning and extensibility \cite{bowmancompilation}
by expressing compiler cases as operational rules,
but it is still limited by its use of contextual equivalence.
While contextual equivalence results do provide certain guarantees with regard to security that we do not consider,
we believe that a more compositional paradigm is necessary to reduce the burden of multilanguage reasoning.

Recent work on multilanguage semantics enables complex cross-language interoperation
by directly defining the semantics of the source languages in terms of a target-language logical relation \cite{patterson-semantics}. This enables a rich variety of FFI designs and multilanguage interoperation. However, since the source languages do not have their own semantics, it gives up source-level reasoning, which reduces the benefit of writing higher-level code, especially in verified settings. This approach uses a relation that generalizes the compiler as the specification of source-language behavior, so it does not adapt well to our primary concern of compiler correctness. In contrast, we define a source-level equational theory, which allows us to separate the specification of each feature's behavior from the compiler's implementation, verify that the former describes the latter, and thereafter reason only at the source level.

\subsection{Existing Verified Compilers}

Since Pyrosome currently is a foundational study rather than a production tool,
we cannot compare it directly to the tooling built up for more mature projects like CakeML \cite{cake} and CompCert \cite{compcert,compcerto}.
However, we can discuss what improvements or extensions to our theoretical results might be necessary before compilers with similar feature sets can be implemented in Pyrosome.
One major hurdle is developing our story for optimization as discussed in \autoref{sec:optimization}.
While prior work shows that real optimizations can be verified against equational theories,
it remains to properly explore the design space regarding how best to make them extensible.

The other significant gap between what we theoretically support and existing verified-compiler ecosystems is our lack of refinement relations since GATs by definition define language semantics in terms of equality.
There are a number of ways to encode refinement or nondeterministic computation,
ranging from defining the relation internally in the object language
to viewing nondeterministic operations as computing sets of results.
However, the most straightforward approach to incorporating refinement would be to generalize the framework itself.
None of the core metatheory in Pyrosome depends on symmetry, so it should be possible to remove it as an axiom
and explore the expanded space of theories.
We would also need to consider how to handle translating programs from Turing-complete languages like System F to finite state machines like assembly languages, possibly by using refinement and an approach like \citet{beck2024two}.
CompCert also features graph-shaped IRs, which we have yet to investigate.

\subsection{DimSum}
The recent DimSum framework \cite{dimsum} gives a convincing accounting of language-agnostic compiler verification and cross-language linking for low-level, event-based systems.
However, we see Pyrosome and DimSum as complementary.
DimSum's design addresses cross-language library interaction at preexisting
language interfaces, primarily procedure calls in their examples.
However, it does not provide machinery for extending the languages themselves,
especially with new features that do not fit the preexisting event type.
Such reasoning might be possible via some clever design pattern that expresses each language feature as an event,
but this approach is clearly not an idiomatic or intended way to use their system.

Unlike DimSum, Pyrosome's syntactic approach based on equational theories allows us to extend a language's syntax and semantics at a fine-grained level. 
In addition to deriving results about the interactions of programs across languages,
we enable users of Pyrosome to reason about programs, languages, and compilers
in a way that is preserved by language extension as well.
Our syntactic core also supports both programmatic manipulation of language definitions
and our extensive automation of proof goals.

In reference to prior work on multilanguage semantics \cite{multilang,perconti2014verifying,ML-verif-comp},
\citet{dimsum} conclude that ``syntactic multi-languages scale
well to typed, higher-order languages. In [the DimSum] paper, we have put the focus on different kinds of
languages: untyped, low-level languages comparable to C and assembly.''
We largely agree with this analysis.
As we discuss above in \ref{sec:multilang} and as the DimSum authors concur,
prior work on constructing multilanguages suffered from difficulties with
the ad-hoc nature of multilanguage construction and the limitations of contextual equivalence.
Since we have addressed both of these concerns in Pyrosome, we see the methodology of Pyrosome
as the preferred approach in typed and/or higher-order settings.
In particular, while Pyrosome allows linking and compiler-correctness results
to depend on the guarantees of the involved languages' type systems,
it is unclear how to express such properties in DimSum without a fundamental extension of their framework.

\section{Future Work}

As-is, compilation theorems proven using \fwk{} include the initial source-language
and final target-language specifications in their trusted base, in addition to Pyrosome's core definitions.
We hope to bridge the gap between existing formal semantics and ones
defined in \fwk{} so that end users can benefit from the extensibility
of engineering done in \fwk{} without having to trust \fwk{}'s definitions.
The most important step in this process is to develop models of target-language theories
implemented by verified low-level systems.
Although we currently only target \fwk{} languages with our case studies,
our compilers and metatheorems support any target model that validates core well-formedness and equivalence properties.
We hope in the future to target established projects like CompCert~\cite{compcert} or Bedrock~\cite{lightbulb}
so that upper levels of a compiler can use the flexibility of Pyrosome
while benefiting from established codebases beneath.
Target models could even be denotational in nature.
For example, work using interaction trees already tends toward algebraic reasoning via corpuses of equational lemmas \cite{xia2019interaction}.
It may be worth investigating interaction trees as a denotation for Pyrosome languages,
since the combination could leverage Pyrosome's generic tooling to prove facts about interaction trees.

Additionally, \fwk{} was designed to support a wide range of features at both the term and type levels.
Originally developed for dependent types, the theory behind Pyrosome has been shown to support a variety of interesting type-level features \cite{gats,cartmell} in language definitions.
For example, we have mechanized a dependent variant of our substitution language,
in other words Martin-L\"of Type Theory without connectives,
following the descriptions in prior work \cite{dybjer1995internal}.
We expect a sufficiently motivated user could extend the formalization of this calculus
with the usual dependent connectives, such as dependent products and sums, or even a hierarchy of universes \cite{gats}.
Such extensions would rely on our monotonicity and compiler-extension theorems just like our case study.
However, in this paper we chose to limit ourselves to features that fit within our current proof automation.
At present many dependent connectives would require more manual proofs due to
our current automation's reliance on syntactic type equality during type inference.
This restriction is not related to our theorems but is rather an issue of tactic-engineering complexity,
and we hope to extend our automation to support such features better in the future.

\begin{acks}                            
  We thank Amal Ahmed for early guidance on this project and the anonymous reviewers for extensive constructive feedback.
  This research was supported by the \grantsponsor{GS13}{National Science Foundation Graduate Research Fellowship Program}{} under Grant No. \grantnum{GS13}{2141064}.
  Any opinions, findings, and conclusions or recommendations expressed in this material are those of the author(s) and do not necessarily reflect the views of the National Science Foundation.
\end{acks}

\bibliography{bibliography.bib}


\begin{thebibliography}{51}


\ifx \showCODEN    \undefined \def \showCODEN     #1{\unskip}     \fi
\ifx \showDOI      \undefined \def \showDOI       #1{#1}\fi
\ifx \showISBNx    \undefined \def \showISBNx     #1{\unskip}     \fi
\ifx \showISBNxiii \undefined \def \showISBNxiii  #1{\unskip}     \fi
\ifx \showISSN     \undefined \def \showISSN      #1{\unskip}     \fi
\ifx \showLCCN     \undefined \def \showLCCN      #1{\unskip}     \fi
\ifx \shownote     \undefined \def \shownote      #1{#1}          \fi
\ifx \showarticletitle \undefined \def \showarticletitle #1{#1}   \fi
\ifx \showURL      \undefined \def \showURL       {\relax}        \fi
\providecommand\bibfield[2]{#2}
\providecommand\bibinfo[2]{#2}
\providecommand\natexlab[1]{#1}
\providecommand\showeprint[2][]{arXiv:#2}

\bibitem[Abadi et~al\mbox{.}(1989)]%
        {abadi1989explicit}
\bibfield{author}{\bibinfo{person}{Martin Abadi}, \bibinfo{person}{Luca
  Cardelli}, \bibinfo{person}{P-L Curien}, {and} \bibinfo{person}{J-J
  L{\'e}vy}.} \bibinfo{year}{1989}\natexlab{}.
\newblock \showarticletitle{Explicit substitutions}. In
  \bibinfo{booktitle}{\emph{Proceedings of the 17th ACM SIGPLAN-SIGACT
  symposium on Principles of programming languages}}. \bibinfo{pages}{31--46}.
\newblock


\bibitem[Ahmed(2015)]%
        {ML-verif-comp}
\bibfield{author}{\bibinfo{person}{Amal Ahmed}.}
  \bibinfo{year}{2015}\natexlab{}.
\newblock \showarticletitle{{Verified Compilers for a Multi-Language World}}.
  In \bibinfo{booktitle}{\emph{1st Summit on Advances in Programming Languages
  (SNAPL 2015)}} \emph{(\bibinfo{series}{Leibniz International Proceedings in
  Informatics (LIPIcs)}, Vol.~\bibinfo{volume}{32})},
  \bibfield{editor}{\bibinfo{person}{Thomas Ball}, \bibinfo{person}{Rastislav
  Bodik}, \bibinfo{person}{Shriram Krishnamurthi}, \bibinfo{person}{Benjamin~S.
  Lerner}, {and} \bibinfo{person}{Greg Morrisett}} (Eds.).
  \bibinfo{publisher}{Schloss Dagstuhl--Leibniz-Zentrum fuer Informatik},
  \bibinfo{address}{Dagstuhl, Germany}, \bibinfo{pages}{15--31}.
\newblock
\showISBNx{978-3-939897-80-4}
\showISSN{1868-8969}
\urldef\tempurl%
\url{https://doi.org/10.4230/LIPIcs.SNAPL.2015.15}
\showDOI{\tempurl}


\bibitem[Ahmed and Blume(2011)]%
        {ahmed2011equivalence}
\bibfield{author}{\bibinfo{person}{Amal Ahmed} {and} \bibinfo{person}{Matthias
  Blume}.} \bibinfo{year}{2011}\natexlab{}.
\newblock \showarticletitle{An equivalence-preserving CPS translation via
  multi-language semantics}. In \bibinfo{booktitle}{\emph{Proceedings of the
  16th ACM SIGPLAN international conference on Functional programming}}.
  \bibinfo{pages}{431--444}.
\newblock


\bibitem[Allais et~al\mbox{.}(2018)]%
        {Allais}
\bibfield{author}{\bibinfo{person}{Guillaume Allais}, \bibinfo{person}{Robert
  Atkey}, \bibinfo{person}{James Chapman}, \bibinfo{person}{Conor McBride},
  {and} \bibinfo{person}{James McKinna}.} \bibinfo{year}{2018}\natexlab{}.
\newblock \showarticletitle{A Type and Scope Safe Universe of Syntaxes with
  Binding: Their Semantics and Proofs}.
\newblock \bibinfo{journal}{\emph{Proc. ACM Program. Lang.}}
  \bibinfo{volume}{2}, \bibinfo{number}{ICFP}, Article \bibinfo{articleno}{90}
  (\bibinfo{date}{July} \bibinfo{year}{2018}), \bibinfo{numpages}{30}~pages.
\newblock
\showISSN{2475-1421}
\urldef\tempurl%
\url{https://doi.org/10.1145/3236785}
\showDOI{\tempurl}


\bibitem[Beck et~al\mbox{.}(2024)]%
        {beck2024two}
\bibfield{author}{\bibinfo{person}{Calvin Beck}, \bibinfo{person}{Irene Yoon},
  \bibinfo{person}{Hanxi Chen}, \bibinfo{person}{Yannick Zakowski}, {and}
  \bibinfo{person}{Steve Zdancewic}.} \bibinfo{year}{2024}\natexlab{}.
\newblock \showarticletitle{A two-phase infinite/finite low-level memory model:
  Reconciling integer--pointer casts, finite space, and undef at the llvm ir
  level of abstraction}.
\newblock \bibinfo{journal}{\emph{Proceedings of the ACM on Programming
  Languages}} \bibinfo{volume}{8}, \bibinfo{number}{ICFP}
  (\bibinfo{year}{2024}), \bibinfo{pages}{789--817}.
\newblock


\bibitem[Benton(2004)]%
        {benton2004simple}
\bibfield{author}{\bibinfo{person}{Nick Benton}.}
  \bibinfo{year}{2004}\natexlab{}.
\newblock \showarticletitle{Simple relational correctness proofs for static
  analyses and program transformations}.
\newblock \bibinfo{journal}{\emph{ACM SIGPLAN Notices}} \bibinfo{volume}{39},
  \bibinfo{number}{1} (\bibinfo{year}{2004}), \bibinfo{pages}{14--25}.
\newblock


\bibitem[Blanchette et~al\mbox{.}(2019)]%
        {Blanchette}
\bibfield{author}{\bibinfo{person}{Jasmin~Christian Blanchette},
  \bibinfo{person}{Lorenzo Gheri}, \bibinfo{person}{Andrei Popescu}, {and}
  \bibinfo{person}{Dmitriy Traytel}.} \bibinfo{year}{2019}\natexlab{}.
\newblock \showarticletitle{Bindings As Bounded Natural Functors}.
\newblock \bibinfo{journal}{\emph{Proc. ACM Program. Lang.}}
  \bibinfo{volume}{3}, \bibinfo{number}{POPL}, Article \bibinfo{articleno}{22}
  (\bibinfo{date}{Jan.} \bibinfo{year}{2019}), \bibinfo{numpages}{34}~pages.
\newblock
\showISSN{2475-1421}
\urldef\tempurl%
\url{https://doi.org/10.1145/3290335}
\showDOI{\tempurl}


\bibitem[Bowman(2021)]%
        {bowmancompilation}
\bibfield{author}{\bibinfo{person}{William~J Bowman}.}
  \bibinfo{year}{2021}\natexlab{}.
\newblock \showarticletitle{Compilation as Multi-Language Semantics}. In
  \bibinfo{booktitle}{\emph{Workshop on Principles of Secure Compilation}}.
  \bibinfo{publisher}{ACM}, \bibinfo{address}{New York, NY, USA}.
\newblock


\bibitem[Cartmell(1986)]%
        {cartmell}
\bibfield{author}{\bibinfo{person}{John Cartmell}.}
  \bibinfo{year}{1986}\natexlab{}.
\newblock \showarticletitle{Generalised algebraic theories and contextual
  categories}.
\newblock \bibinfo{journal}{\emph{Annals of Pure and Applied Logic}}
  \bibinfo{volume}{32} (\bibinfo{year}{1986}), \bibinfo{pages}{209--243}.
\newblock
\showISSN{0168-0072}
\urldef\tempurl%
\url{https://doi.org/10.1016/0168-0072(86)90053-9}
\showDOI{\tempurl}


\bibitem[Chen et~al\mbox{.}(2021)]%
        {k-proof}
\bibfield{author}{\bibinfo{person}{Xiaohong Chen}, \bibinfo{person}{Zhengyao
  Lin}, \bibinfo{person}{Minh-Thai Trinh}, {and} \bibinfo{person}{Grigore
  Ro\c{s}u}.} \bibinfo{year}{2021}\natexlab{}.
\newblock \showarticletitle{Towards a Trustworthy Semantics-Based Language
  Framework via Proof Generation}. In \bibinfo{booktitle}{\emph{Proceedings of
  the 33rd International Conference on Computer-Aided Verification}}.
  \bibinfo{publisher}{ACM}.
\newblock


\bibitem[Chen and Ro\c{s}u(2020)]%
        {k-binders}
\bibfield{author}{\bibinfo{person}{Xiaohong Chen} {and}
  \bibinfo{person}{Grigore Ro\c{s}u}.} \bibinfo{year}{2020}\natexlab{}.
\newblock \bibinfo{booktitle}{\emph{A General Approach to Define Binders Using
  Matching Logic}}.
\newblock \bibinfo{type}{{T}echnical {R}eport}
  http://hdl.handle.net/2142/106608. \bibinfo{institution}{University of
  Illinois at Urbana-Champaign}.
\newblock


\bibitem[Chlipala(2010)]%
        {chlipala2010verified}
\bibfield{author}{\bibinfo{person}{Adam Chlipala}.}
  \bibinfo{year}{2010}\natexlab{}.
\newblock \showarticletitle{A Verified Compiler for an Impure Functional
  Language}. In \bibinfo{booktitle}{\emph{POPL'10: Proceedings of the 37th ACM
  SIGPLAN-SIGACT Symposium on Principles of Programming Languages}} (Madrid,
  Spain).
\newblock
\urldef\tempurl%
\url{http://adam.chlipala.net/papers/ImpurePOPL10/}
\showURL{%
\tempurl}


\bibitem[Delaware et~al\mbox{.}(2013a)]%
        {MTC}
\bibfield{author}{\bibinfo{person}{Benjamin Delaware},
  \bibinfo{person}{Bruno~C. d. S.~Oliveira}, {and} \bibinfo{person}{Tom
  Schrijvers}.} \bibinfo{year}{2013}\natexlab{a}.
\newblock \showarticletitle{Meta-theory \`{a} La Carte}. In
  \bibinfo{booktitle}{\emph{POPL (2013)}} (Rome, Italy).
  \bibinfo{publisher}{ACM}, \bibinfo{pages}{207--218}.
\newblock
\showISBNx{978-1-4503-1832-7}


\bibitem[Delaware et~al\mbox{.}(2013b)]%
        {3MT}
\bibfield{author}{\bibinfo{person}{Benjamin Delaware}, \bibinfo{person}{Steven
  Keuchel}, \bibinfo{person}{Tom Schrijvers}, {and}
  \bibinfo{person}{Bruno~C.d.S. Oliveira}.} \bibinfo{year}{2013}\natexlab{b}.
\newblock \showarticletitle{Modular Monadic Meta-theory}. In
  \bibinfo{booktitle}{\emph{Proceedings of the 18th ACM SIGPLAN International
  Conference on Functional Programming}} (Boston, Massachusetts, USA)
  \emph{(\bibinfo{series}{ICFP '13})}. \bibinfo{publisher}{ACM},
  \bibinfo{address}{New York, NY, USA}, \bibinfo{pages}{319--330}.
\newblock
\showISBNx{978-1-4503-2326-0}
\urldef\tempurl%
\url{https://doi.org/10.1145/2500365.2500587}
\showDOI{\tempurl}


\bibitem[Devriese et~al\mbox{.}(2016)]%
        {devriese2016fully}
\bibfield{author}{\bibinfo{person}{Dominique Devriese}, \bibinfo{person}{Marco
  Patrignani}, {and} \bibinfo{person}{Frank Piessens}.}
  \bibinfo{year}{2016}\natexlab{}.
\newblock \showarticletitle{Fully-abstract compilation by approximate
  back-translation}. In \bibinfo{booktitle}{\emph{Proceedings of the 43rd
  Annual ACM SIGPLAN-SIGACT Symposium on Principles of Programming Languages}}.
  \bibinfo{pages}{164--177}.
\newblock


\bibitem[Dybjer(1995)]%
        {dybjer1995internal}
\bibfield{author}{\bibinfo{person}{Peter Dybjer}.}
  \bibinfo{year}{1995}\natexlab{}.
\newblock \showarticletitle{Internal type theory}. In
  \bibinfo{booktitle}{\emph{International Workshop on Types for Proofs and
  Programs}}. Springer, \bibinfo{pages}{120--134}.
\newblock


\bibitem[Erbsen et~al\mbox{.}(2021)]%
        {lightbulb}
\bibfield{author}{\bibinfo{person}{Andres Erbsen}, \bibinfo{person}{Samuel
  Gruetter}, \bibinfo{person}{Joonwon Choi}, \bibinfo{person}{Clark Wood},
  {and} \bibinfo{person}{Adam Chlipala}.} \bibinfo{year}{2021}\natexlab{}.
\newblock \showarticletitle{Integration Verification Across Software and
  Hardware for a Simple Embedded System}. In \bibinfo{booktitle}{\emph{PLDI}},
  Vol.~\bibinfo{volume}{21}. \bibinfo{pages}{2021}.
\newblock


\bibitem[Felleisen(1991)]%
        {felleisen1991expressive}
\bibfield{author}{\bibinfo{person}{Matthias Felleisen}.}
  \bibinfo{year}{1991}\natexlab{}.
\newblock \showarticletitle{On the expressive power of programming languages}.
\newblock \bibinfo{journal}{\emph{Science of computer programming}}
  \bibinfo{volume}{17}, \bibinfo{number}{1-3} (\bibinfo{year}{1991}),
  \bibinfo{pages}{35--75}.
\newblock


\bibitem[Fiore and Szamozvancev(2022)]%
        {fiore2022}
\bibfield{author}{\bibinfo{person}{Marcelo Fiore} {and}
  \bibinfo{person}{Dmitrij Szamozvancev}.} \bibinfo{year}{2022}\natexlab{}.
\newblock \showarticletitle{Formal metatheory of second-order abstract syntax}.
\newblock \bibinfo{journal}{\emph{Proc. ACM Program. Lang.}}
  \bibinfo{volume}{6}, \bibinfo{number}{POPL}, Article \bibinfo{articleno}{53}
  (\bibinfo{date}{Jan.} \bibinfo{year}{2022}), \bibinfo{numpages}{29}~pages.
\newblock
\urldef\tempurl%
\url{https://doi.org/10.1145/3498715}
\showDOI{\tempurl}


\bibitem[Harper et~al\mbox{.}(1993)]%
        {harper1993framework}
\bibfield{author}{\bibinfo{person}{Robert Harper}, \bibinfo{person}{Furio
  Honsell}, {and} \bibinfo{person}{Gordon Plotkin}.}
  \bibinfo{year}{1993}\natexlab{}.
\newblock \showarticletitle{A framework for defining logics}.
\newblock \bibinfo{journal}{\emph{Journal of the ACM (JACM)}}
  \bibinfo{volume}{40}, \bibinfo{number}{1} (\bibinfo{year}{1993}),
  \bibinfo{pages}{143--184}.
\newblock


\bibitem[Kang et~al\mbox{.}(2016)]%
        {kang2016}
\bibfield{author}{\bibinfo{person}{Jeehoon Kang}, \bibinfo{person}{Yoonseung
  Kim}, \bibinfo{person}{Chung-Kil Hur}, \bibinfo{person}{Derek Dreyer}, {and}
  \bibinfo{person}{Viktor Vafeiadis}.} \bibinfo{year}{2016}\natexlab{}.
\newblock \showarticletitle{Lightweight Verification of Separate Compilation}.
  In \bibinfo{booktitle}{\emph{Proceedings of the 43rd Annual ACM
  SIGPLAN-SIGACT Symposium on Principles of Programming Languages}} (St.
  Petersburg, FL, USA) \emph{(\bibinfo{series}{POPL '16})}.
  \bibinfo{publisher}{Association for Computing Machinery},
  \bibinfo{address}{New York, NY, USA}, \bibinfo{pages}{178–190}.
\newblock
\showISBNx{9781450335492}
\urldef\tempurl%
\url{https://doi.org/10.1145/2837614.2837642}
\showDOI{\tempurl}


\bibitem[Kaposi et~al\mbox{.}(2019)]%
        {QIITs}
\bibfield{author}{\bibinfo{person}{Ambrus Kaposi}, \bibinfo{person}{Andr\'{a}s
  Kov\'{a}cs}, {and} \bibinfo{person}{Thorsten Altenkirch}.}
  \bibinfo{year}{2019}\natexlab{}.
\newblock \showarticletitle{Constructing quotient inductive-inductive types}.
\newblock \bibinfo{journal}{\emph{Proc. ACM Program. Lang.}}
  \bibinfo{volume}{3}, \bibinfo{number}{POPL}, Article \bibinfo{articleno}{2}
  (\bibinfo{date}{Jan.} \bibinfo{year}{2019}), \bibinfo{numpages}{24}~pages.
\newblock
\urldef\tempurl%
\url{https://doi.org/10.1145/3290315}
\showDOI{\tempurl}


\bibitem[Koenig and Shao(2021)]%
        {compcerto}
\bibfield{author}{\bibinfo{person}{J\'{e}r\'{e}mie Koenig} {and}
  \bibinfo{person}{Zhong Shao}.} \bibinfo{year}{2021}\natexlab{}.
\newblock \showarticletitle{CompCertO: compiling certified open C components}.
  In \bibinfo{booktitle}{\emph{Proceedings of the 42nd ACM SIGPLAN
  International Conference on Programming Language Design and Implementation}}
  (Virtual, Canada) \emph{(\bibinfo{series}{PLDI 2021})}.
  \bibinfo{publisher}{Association for Computing Machinery},
  \bibinfo{address}{New York, NY, USA}, \bibinfo{pages}{1095–1109}.
\newblock
\showISBNx{9781450383912}
\urldef\tempurl%
\url{https://doi.org/10.1145/3453483.3454097}
\showDOI{\tempurl}


\bibitem[Kumar et~al\mbox{.}(2014)]%
        {cake}
\bibfield{author}{\bibinfo{person}{Ramana Kumar}, \bibinfo{person}{Magnus~O.
  Myreen}, \bibinfo{person}{Michael Norrish}, {and} \bibinfo{person}{Scott
  Owens}.} \bibinfo{year}{2014}\natexlab{}.
\newblock \showarticletitle{CakeML: A Verified Implementation of ML}. In
  \bibinfo{booktitle}{\emph{POPL 2014}} (San Diego, California, USA)
  \emph{(\bibinfo{series}{POPL '14})}. \bibinfo{publisher}{ACM},
  \bibinfo{address}{New York, NY, USA}, \bibinfo{numpages}{13}~pages.
\newblock


\bibitem[Leroy et~al\mbox{.}(2016)]%
        {compcert}
\bibfield{author}{\bibinfo{person}{Xavier Leroy}, \bibinfo{person}{Sandrine
  Blazy}, \bibinfo{person}{Daniel K\"astner}, \bibinfo{person}{Bernhard
  Schommer}, \bibinfo{person}{Markus Pister}, {and} \bibinfo{person}{Christian
  Ferdinand}.} \bibinfo{year}{2016}\natexlab{}.
\newblock \showarticletitle{CompCert -- A Formally Verified Optimizing
  Compiler}. In \bibinfo{booktitle}{\emph{ERTS 2016}}.
  \bibinfo{publisher}{SEE}.
\newblock


\bibitem[Levy et~al\mbox{.}(2003)]%
        {levy2003modelling}
\bibfield{author}{\bibinfo{person}{PaulBlain Levy}, \bibinfo{person}{John
  Power}, {and} \bibinfo{person}{Hayo Thielecke}.}
  \bibinfo{year}{2003}\natexlab{}.
\newblock \showarticletitle{Modelling environments in call-by-value programming
  languages}.
\newblock \bibinfo{journal}{\emph{Information and computation}}
  \bibinfo{volume}{185}, \bibinfo{number}{2} (\bibinfo{year}{2003}),
  \bibinfo{pages}{182--210}.
\newblock


\bibitem[Li and Appel(2021)]%
        {john-li-rewrites}
\bibfield{author}{\bibinfo{person}{John~M. Li} {and} \bibinfo{person}{Andrew~W.
  Appel}.} \bibinfo{year}{2021}\natexlab{}.
\newblock \showarticletitle{Deriving Efficient Program Transformations from
  Rewrite Rules}.
\newblock \bibinfo{journal}{\emph{Proc. ACM Program. Lang.}}
  \bibinfo{volume}{5}, \bibinfo{number}{ICFP}, Article \bibinfo{articleno}{74}
  (\bibinfo{date}{aug} \bibinfo{year}{2021}), \bibinfo{numpages}{29}~pages.
\newblock
\urldef\tempurl%
\url{https://doi.org/10.1145/3473579}
\showDOI{\tempurl}


\bibitem[Lynch et~al\mbox{.}(2024)]%
        {Lynch_2024}
\bibfield{author}{\bibinfo{person}{Owen Lynch}, \bibinfo{person}{Kris Brown},
  \bibinfo{person}{James Fairbanks}, {and} \bibinfo{person}{Evan Patterson}.}
  \bibinfo{year}{2024}\natexlab{}.
\newblock \showarticletitle{GATlab: Modeling and Programming with Generalized
  Algebraic Theories}.
\newblock \bibinfo{journal}{\emph{Electronic Notes in Theoretical Informatics
  and Computer Science}}  \bibinfo{volume}{Volume 4-Proceedings of the Fortieth
  Conference on the Mathematical Foundations of Programming Semantics}
  (\bibinfo{date}{Dec.} \bibinfo{year}{2024}).
\newblock
\showISSN{2969-2431}
\urldef\tempurl%
\url{https://doi.org/10.46298/entics.14666}
\showDOI{\tempurl}


\bibitem[Mates et~al\mbox{.}(2019)]%
        {mates2019under}
\bibfield{author}{\bibinfo{person}{Phillip Mates}, \bibinfo{person}{Jamie
  Perconti}, {and} \bibinfo{person}{Amal Ahmed}.}
  \bibinfo{year}{2019}\natexlab{}.
\newblock \showarticletitle{Under control: Compositionally correct closure
  conversion with mutable state}. In \bibinfo{booktitle}{\emph{Proceedings of
  the 21st International Symposium on Principles and Practice of Declarative
  Programming}}. \bibinfo{pages}{1--15}.
\newblock


\bibitem[Matthews and Findler(2009)]%
        {multilang}
\bibfield{author}{\bibinfo{person}{Jacob Matthews} {and}
  \bibinfo{person}{Robert~Bruce Findler}.} \bibinfo{year}{2009}\natexlab{}.
\newblock \showarticletitle{Operational Semantics for Multi-language Programs}.
\newblock \bibinfo{journal}{\emph{ACM Trans. Program. Lang. Syst.}}
  \bibinfo{volume}{31}, \bibinfo{number}{3}, Article \bibinfo{articleno}{12}
  (\bibinfo{date}{April} \bibinfo{year}{2009}), \bibinfo{numpages}{44}~pages.
\newblock
\showISSN{0164-0925}
\urldef\tempurl%
\url{https://doi.org/10.1145/1498926.1498930}
\showDOI{\tempurl}


\bibitem[Minamide et~al\mbox{.}(1996)]%
        {minamide1996typed}
\bibfield{author}{\bibinfo{person}{Yasuhiko Minamide}, \bibinfo{person}{Greg
  Morrisett}, {and} \bibinfo{person}{Robert Harper}.}
  \bibinfo{year}{1996}\natexlab{}.
\newblock \showarticletitle{Typed closure conversion}. In
  \bibinfo{booktitle}{\emph{Proceedings of the 23rd ACM SIGPLAN-SIGACT
  symposium on principles of programming languages}}.
  \bibinfo{pages}{271--283}.
\newblock


\bibitem[Morrisett et~al\mbox{.}(1999)]%
        {morrisett1999system}
\bibfield{author}{\bibinfo{person}{Greg Morrisett}, \bibinfo{person}{David
  Walker}, \bibinfo{person}{Karl Crary}, {and} \bibinfo{person}{Neal Glew}.}
  \bibinfo{year}{1999}\natexlab{}.
\newblock \showarticletitle{From System F to typed assembly language}.
\newblock \bibinfo{journal}{\emph{ACM Transactions on Programming Languages and
  Systems (TOPLAS)}} \bibinfo{volume}{21}, \bibinfo{number}{3}
  (\bibinfo{year}{1999}), \bibinfo{pages}{527--568}.
\newblock


\bibitem[New et~al\mbox{.}(2016)]%
        {new2016fully}
\bibfield{author}{\bibinfo{person}{Max~S New}, \bibinfo{person}{William~J
  Bowman}, {and} \bibinfo{person}{Amal Ahmed}.}
  \bibinfo{year}{2016}\natexlab{}.
\newblock \showarticletitle{Fully abstract compilation via universal
  embedding}. In \bibinfo{booktitle}{\emph{Proceedings of the 21st ACM SIGPLAN
  International Conference on Functional Programming}}.
  \bibinfo{pages}{103--116}.
\newblock


\bibitem[Patterson and Ahmed(2019)]%
        {patterson2019next}
\bibfield{author}{\bibinfo{person}{Daniel Patterson} {and}
  \bibinfo{person}{Amal Ahmed}.} \bibinfo{year}{2019}\natexlab{}.
\newblock \showarticletitle{The next 700 compiler correctness theorems
  (functional pearl)}.
\newblock \bibinfo{journal}{\emph{Proceedings of the ACM on Programming
  Languages}} \bibinfo{volume}{3}, \bibinfo{number}{ICFP}
  (\bibinfo{year}{2019}), \bibinfo{pages}{1--29}.
\newblock


\bibitem[Patterson et~al\mbox{.}(2022)]%
        {patterson-semantics}
\bibfield{author}{\bibinfo{person}{Daniel Patterson}, \bibinfo{person}{Noble
  Mushtak}, \bibinfo{person}{Andrew Wagner}, {and} \bibinfo{person}{Amal
  Ahmed}.} \bibinfo{year}{2022}\natexlab{}.
\newblock \showarticletitle{Semantic Soundness for Language Interoperability}.
  In \bibinfo{booktitle}{\emph{Proceedings of the 43rd ACM SIGPLAN
  International Conference on Programming Language Design and Implementation}}
  (San Diego, CA, USA) \emph{(\bibinfo{series}{PLDI 2022})}.
  \bibinfo{publisher}{Association for Computing Machinery},
  \bibinfo{address}{New York, NY, USA}, \bibinfo{pages}{609–624}.
\newblock
\showISBNx{9781450392655}
\urldef\tempurl%
\url{https://doi.org/10.1145/3519939.3523703}
\showDOI{\tempurl}


\bibitem[Perconti and Ahmed(2014)]%
        {perconti2014verifying}
\bibfield{author}{\bibinfo{person}{Jamie Perconti} {and} \bibinfo{person}{Amal
  Ahmed}.} \bibinfo{year}{2014}\natexlab{}.
\newblock \showarticletitle{Verifying an open compiler using multi-language
  semantics}. In \bibinfo{booktitle}{\emph{European Symposium on Programming
  Languages and Systems}}. Springer, \bibinfo{pages}{128--148}.
\newblock


\bibitem[Plotkin and Power(2002)]%
        {plotkin2002notions}
\bibfield{author}{\bibinfo{person}{Gordon Plotkin} {and} \bibinfo{person}{John
  Power}.} \bibinfo{year}{2002}\natexlab{}.
\newblock \showarticletitle{Notions of computation determine monads}. In
  \bibinfo{booktitle}{\emph{International Conference on Foundations of Software
  Science and Computation Structures}}. Springer, \bibinfo{pages}{342--356}.
\newblock


\bibitem[Plotkin(1977)]%
        {plotkin1977lcf}
\bibfield{author}{\bibinfo{person}{Gordon~D. Plotkin}.}
  \bibinfo{year}{1977}\natexlab{}.
\newblock \showarticletitle{LCF considered as a programming language}.
\newblock \bibinfo{journal}{\emph{Theoretical computer science}}
  \bibinfo{volume}{5}, \bibinfo{number}{3} (\bibinfo{year}{1977}),
  \bibinfo{pages}{223--255}.
\newblock


\bibitem[Ro\c{s}u and \c{S}erb{\u{a}}nut{\u{a}}(2010)]%
        {rosu2010overview}
\bibfield{author}{\bibinfo{person}{Grigore Ro\c{s}u} {and}
  \bibinfo{person}{Traian~Florin \c{S}erb{\u{a}}nut{\u{a}}}.}
  \bibinfo{year}{2010}\natexlab{}.
\newblock \showarticletitle{An overview of the K semantic framework}.
\newblock \bibinfo{journal}{\emph{The Journal of Logic and Algebraic
  Programming}} \bibinfo{volume}{79}, \bibinfo{number}{6}
  (\bibinfo{year}{2010}), \bibinfo{pages}{397--434}.
\newblock


\bibitem[Sammler et~al\mbox{.}(2023)]%
        {dimsum}
\bibfield{author}{\bibinfo{person}{Michael Sammler}, \bibinfo{person}{Simon
  Spies}, \bibinfo{person}{Youngju Song}, \bibinfo{person}{Emanuele D'Osualdo},
  \bibinfo{person}{Robbert Krebbers}, \bibinfo{person}{Deepak Garg}, {and}
  \bibinfo{person}{Derek Dreyer}.} \bibinfo{year}{2023}\natexlab{}.
\newblock \showarticletitle{DimSum: A Decentralized Approach to Multi-Language
  Semantics and Verification}.
\newblock \bibinfo{journal}{\emph{Proc. ACM Program. Lang.}}
  \bibinfo{volume}{7}, \bibinfo{number}{POPL}, Article \bibinfo{articleno}{27}
  (\bibinfo{date}{jan} \bibinfo{year}{2023}), \bibinfo{numpages}{31}~pages.
\newblock
\urldef\tempurl%
\url{https://doi.org/10.1145/3571220}
\showDOI{\tempurl}


\bibitem[Sch{\"a}fer et~al\mbox{.}(2015)]%
        {schafer2015autosubst}
\bibfield{author}{\bibinfo{person}{Steven Sch{\"a}fer}, \bibinfo{person}{Tobias
  Tebbi}, {and} \bibinfo{person}{Gert Smolka}.}
  \bibinfo{year}{2015}\natexlab{}.
\newblock \showarticletitle{Autosubst: Reasoning with de Bruijn terms and
  parallel substitutions}. In \bibinfo{booktitle}{\emph{Interactive Theorem
  Proving: 6th International Conference, ITP 2015, Nanjing, China, August
  24-27, 2015, Proceedings 6}}. Springer, \bibinfo{pages}{359--374}.
\newblock


\bibitem[Stark et~al\mbox{.}(2019)]%
        {stark2019autosubst}
\bibfield{author}{\bibinfo{person}{Kathrin Stark}, \bibinfo{person}{Steven
  Sch{\"a}fer}, {and} \bibinfo{person}{Jonas Kaiser}.}
  \bibinfo{year}{2019}\natexlab{}.
\newblock \showarticletitle{Autosubst 2: reasoning with multi-sorted de Bruijn
  terms and vector substitutions}. In \bibinfo{booktitle}{\emph{Proceedings of
  the 8th ACM SIGPLAN International Conference on Certified Programs and
  Proofs}}. \bibinfo{pages}{166--180}.
\newblock


\bibitem[Sterling(2019)]%
        {gats}
\bibfield{author}{\bibinfo{person}{Jonathan Sterling}.}
  \bibinfo{year}{2019}\natexlab{}.
\newblock \bibinfo{title}{Algebraic Type Theory and Universe Hierarchies}.
  (\bibinfo{year}{2019}).
\newblock
\newblock
\shownote{\texttt{https://arxiv.org/pdf/1902.08848}}.


\bibitem[Swierstra(2008)]%
        {datatypes-a-la-carte}
\bibfield{author}{\bibinfo{person}{Wouter Swierstra}.}
  \bibinfo{year}{2008}\natexlab{}.
\newblock \showarticletitle{Data Types \`{a} La Carte}.
\newblock \bibinfo{journal}{\emph{J. Funct. Program.}} \bibinfo{volume}{18},
  \bibinfo{number}{4} (\bibinfo{date}{July} \bibinfo{year}{2008}),
  \bibinfo{pages}{423--436}.
\newblock
\showISSN{0956-7968}
\urldef\tempurl%
\url{https://doi.org/10.1017/S0956796808006758}
\showDOI{\tempurl}


\bibitem[Wang et~al\mbox{.}(2014)]%
        {cito}
\bibfield{author}{\bibinfo{person}{Peng Wang}, \bibinfo{person}{Santiago
  Cuellar}, {and} \bibinfo{person}{Adam Chlipala}.}
  \bibinfo{year}{2014}\natexlab{}.
\newblock \showarticletitle{Compiler Verification Meets Cross-Language Linking
  via Data Abstraction}. In \bibinfo{booktitle}{\emph{OOPSLA'14: Proceedings of
  the 2014 ACM SIGPLAN International Conference on Object-Oriented Programming,
  Systems, Languages, and Applications}} (Portland, OR, USA).
\newblock
\urldef\tempurl%
\url{http://adam.chlipala.net/papers/CitoOOPSLA14/}
\showURL{%
\tempurl}


\bibitem[Willsey et~al\mbox{.}(2021)]%
        {egg}
\bibfield{author}{\bibinfo{person}{Max Willsey}, \bibinfo{person}{Chandrakana
  Nandi}, \bibinfo{person}{Yisu~Remy Wang}, \bibinfo{person}{Oliver Flatt},
  \bibinfo{person}{Zachary Tatlock}, {and} \bibinfo{person}{Pavel Panchekha}.}
  \bibinfo{year}{2021}\natexlab{}.
\newblock \showarticletitle{Egg: Fast and Extensible Equality Saturation}.
\newblock \bibinfo{journal}{\emph{Proc. ACM Program. Lang.}}
  \bibinfo{volume}{5}, \bibinfo{number}{POPL}, Article \bibinfo{articleno}{23}
  (\bibinfo{date}{jan} \bibinfo{year}{2021}), \bibinfo{numpages}{29}~pages.
\newblock
\urldef\tempurl%
\url{https://doi.org/10.1145/3434304}
\showDOI{\tempurl}


\bibitem[Winskel(1993)]%
        {Winskel}
\bibfield{author}{\bibinfo{person}{Glynn Winskel}.}
  \bibinfo{year}{1993}\natexlab{}.
\newblock \bibinfo{booktitle}{\emph{The Formal Semantics of Programming
  Languages: An Introduction}}.
\newblock \bibinfo{publisher}{The MIT Press}.
\newblock
\showISBNx{9780262291453}
\urldef\tempurl%
\url{https://doi.org/10.7551/mitpress/3054.001.0001}
\showDOI{\tempurl}


\bibitem[Xia et~al\mbox{.}(2019)]%
        {xia2019interaction}
\bibfield{author}{\bibinfo{person}{Li-yao Xia}, \bibinfo{person}{Yannick
  Zakowski}, \bibinfo{person}{Paul He}, \bibinfo{person}{Chung-Kil Hur},
  \bibinfo{person}{Gregory Malecha}, \bibinfo{person}{Benjamin~C Pierce}, {and}
  \bibinfo{person}{Steve Zdancewic}.} \bibinfo{year}{2019}\natexlab{}.
\newblock \showarticletitle{Interaction trees: representing recursive and
  impure programs in Coq}.
\newblock \bibinfo{journal}{\emph{Proceedings of the ACM on Programming
  Languages}} \bibinfo{volume}{4}, \bibinfo{number}{POPL}
  (\bibinfo{year}{2019}), \bibinfo{pages}{1--32}.
\newblock


\bibitem[Yang et~al\mbox{.}(2011)]%
        {csmith}
\bibfield{author}{\bibinfo{person}{Xuejun Yang}, \bibinfo{person}{Yang Chen},
  \bibinfo{person}{Eric Eide}, {and} \bibinfo{person}{John Regehr}.}
  \bibinfo{year}{2011}\natexlab{}.
\newblock \showarticletitle{Finding and Understanding Bugs in C Compilers}. In
  \bibinfo{booktitle}{\emph{PLDI 2011}} (San Jose, California, USA).
  \bibinfo{publisher}{ACM}, \bibinfo{numpages}{12}~pages.
\newblock


\bibitem[Zhang et~al\mbox{.}(2024)]%
        {10.1145/3632914}
\bibfield{author}{\bibinfo{person}{Ling Zhang}, \bibinfo{person}{Yuting Wang},
  \bibinfo{person}{Jinhua Wu}, \bibinfo{person}{J\'{e}r\'{e}mie Koenig}, {and}
  \bibinfo{person}{Zhong Shao}.} \bibinfo{year}{2024}\natexlab{}.
\newblock \showarticletitle{Fully Composable and Adequate Verified Compilation
  with Direct Refinements between Open Modules}.
\newblock \bibinfo{journal}{\emph{Proc. ACM Program. Lang.}}
  \bibinfo{volume}{8}, \bibinfo{number}{POPL}, Article \bibinfo{articleno}{72}
  (\bibinfo{date}{Jan.} \bibinfo{year}{2024}), \bibinfo{numpages}{31}~pages.
\newblock
\urldef\tempurl%
\url{https://doi.org/10.1145/3632914}
\showDOI{\tempurl}


\bibitem[Zhang et~al\mbox{.}(2023)]%
        {egglog}
\bibfield{author}{\bibinfo{person}{Yihong Zhang}, \bibinfo{person}{Yisu~Remy
  Wang}, \bibinfo{person}{Oliver Flatt}, \bibinfo{person}{David Cao},
  \bibinfo{person}{Philip Zucker}, \bibinfo{person}{Eli Rosenthal},
  \bibinfo{person}{Zachary Tatlock}, {and} \bibinfo{person}{Max Willsey}.}
  \bibinfo{year}{2023}\natexlab{}.
\newblock \showarticletitle{Better together: Unifying datalog and equality
  saturation}.
\newblock \bibinfo{journal}{\emph{Proceedings of the ACM on Programming
  Languages}} \bibinfo{volume}{7}, \bibinfo{number}{PLDI}
  (\bibinfo{year}{2023}), \bibinfo{pages}{468--492}.
\newblock


\end{thebibliography}

\end{document}